\newtheorem{theorem}{Theorem}
\theoremstyle{plain}
\newtheorem{case}{Case}
\newtheorem{definition}{Definition}
\newtheorem{example}{Example}
\newtheorem{lemma}{Lemma}
\newtheorem{remark}{Remark}
\numberwithin{equation}{section}
\begin{document}
\title[]{Some Classes of Non-archimedean Radial Probability
Density Functions Associated With Energy Landscapes}
\subjclass{}
\keywords{Non-archimedean analysis, strong Markov processes, pseudodifferential operators, energy landscapes, convolution semigroup.}

\begin{abstract}
In this article, we study a large class of radial probability density functions defined on the $p$-adic numbers
from which it is possible to obtain certain non-archimedean pseudo-differential operators. These operators are associated with certain $p$-adic master equations of some models of complex systems (such as glasses, macromolecules, and proteins). We prove via the theory of distributions some properties corresponding to the heat Kernel associated with these pseudo-differential operators. Also,
study some properties corresponding to the fundamental solution of these $p$-adic equations. Finally, we will study strong Markov processes, the first passage time problem and the survival probability (of the trajectories of these processes) corresponding to radial probability density functions connected with energy landscapes of the linear and logarithmic types.
\end{abstract}

\author{Ismael Guti\'{e}rrez-Garc\'{\i}a}
\email{isgutier@uninorte.edu.co}
\author{Anselmo Torresblanca-Badillo}
\email{atorresblanca@uninorte.edu.co}
\address{Universidad del Norte, Departamento de Matem\'{a}ticas y Est\'{a}%
distica, Km. 5 V\'{\i}a Puerto Colombia. Barranquilla, Colombia.}
\maketitle

\section{Introduction}
Many problems in physics as the dynamics of complex systems are described by a random walk on a complex energy landscape, see e.g. \cite{Fraunfelder et al 2}-\cite{Kozyrev  SV}, \cite{Me-Pa-Pa}-\cite{S-W-2}, \cite{To-z-2}-\cite{Zu-2}.\ An energy landscape (or simply a landscape) is a mapping of all possible conformations of a molecular entity, or the spatial positions of interacting molecules in a system. Mathematically, an energy landscape is a continuous function $\mathbb{U} : X \rightarrow \mathbb{R}$ that assigns to each physical state of a system its energy, where $X$ is a topological space. The term \textit{complex landscape} means that the function $\mathbb{U}$ has many local minima.

Along this article, $p$ will denote a prime number. The field of
$p$-adic numbers $\mathbb{Q}_p$ is defined as the completion of the field of rational numbers $\mathbb{Q}$ concerning the $p$-adic norm $|\cdot |_p$, which is defined as
\begin{equation*}
|x|_p = \begin{cases}
0, & \text{if}  \ \ x=0 \\
p^{-\gamma }, & \text{if} \ \  x=p^{\gamma }\frac{a}{b}\text{,}
\end{cases}
\end{equation*}
where $a$ and $b$ are integers coprime with $p$. The integer $\gamma :=ord_p(x) $, with $ord_p(0):=+\infty $, is called the\textit{\ }$p$-\textit{adic order of} $x$.

We extend the $p$-adic norm to $\mathbb{Q}_p^n$ by taking
\begin{equation*}
\|x\|_p:=\max_{1\leq i\leq n}|x_i|_p,\text{ for } x=(x_1,\ldots ,x_n)\in \mathbb{Q}_p^n.
\end{equation*}
In \cite{Av-4}-\cite{Av-5} Avetisov et al. developed, in dimension one, new class of models associated with $p$-adic
pseudodifferential operators which describe the dynamics of many complex systems. In these models, the time-evolution of the
system is controlled by a master equation of the form
\begin{equation}\label{Master_E}
\frac{\partial u(x,t)}{\partial t} = \int\limits_{\mathbb{Q}_p} \{ j (x\mid y) \ u(y,t)-j(y\mid x)\ u(x,t)\} dy, x \in \mathbb{Q}_p, t\in \mathbb{R}_{+},
\end{equation}
where the function $u\left( x,t\right) :\mathbb{Q}_p\times \mathbb{R}_{+}\rightarrow \mathbb{R}_{+}$ is a probability
density distribution, and the function $j\left( x\mid y\right) :\mathbb{Q}_p\times \mathbb{Q}_p\rightarrow \mathbb{R}_{+}$ is the probability of transition from state $y$ to the state $x$ per unit of time. The transition from a state $y$ to a state $x$ can be perceived as overcoming the energy barrier \ separating these states. In \cite{Av-4} an Arrhenius type relation was used, that is,
\begin{equation*}
j(x\mid y) \sim A(T)\exp \left\{ -\frac{\mathbb{U} (x\mid y)}{kT} \right\},
\end{equation*}
where $\mathbb{U}\left( x\mid y\right) $ is the height of the activation barrier for the
transition from the state $y$ to state $x$, $k$ is the Boltzmann constant, $T$ is the temperature
and $A(T)$ is a positive constant depending  of $T$. This formula establishes a
relation between the structure of the energy landscape $\mathbb{U} (x\mid y)$ and the transition function $j(x\mid y)$. The
case $j(x\mid y) =j(y\mid x)$ corresponds to a \textit{degenerate energy landscape}. In this case the master equation (\ref{Master_E}) takes the form
\begin{equation*}\label{Master_E1}
\frac{\partial u(x,t)}{\partial t} = \int\limits_{\mathbb{Q}_p} j\left(|x-y|_p\right) \{u(y,t)-u(x,t)\} dy,
\end{equation*}
where $j\left(|x-y|_p\right) =\frac{A(T)}{|x-y|_p}\exp \left\{ -\frac{\mathbb{U}(|x-y|_p) }{kT}\right\} $.

Relaxation kinetics in complex systems is often fitted into three empirical laws, namely, the stretched exponential curve (the Kohlrausch-Williams-Watts law), $\sim e^{-(\frac{t}{\tau})^{\alpha}}$, $0<\alpha<1$, the power decay law,
$\sim \left(\frac{t}{\tau}\right)^{-\alpha}$, $0<\alpha$, and the logarithmic decay law, $\sim \alpha\left[\ln\left(\frac{t}{\tau}\right)\right]^{-1}$, $1<\alpha$.
These types of relaxation are characteristic of complex systems, see \cite{Av-4}, \cite{Brawer}, \cite{Matsuoka} and \cite{Ngai}. Motivated by this, is that three hierarchical energy landscapes are considered: logarithmic, linear and exponential landscapes.

There are three probability of transition density functions $j$ associated respectively with each of the energy landscapes described above. These transition functions are:
\begin{enumerate}[(i)]
\item (logarithmic) $j\left(|x-y|_p\right) = \frac{1}{|x-y|_p \ln ^{\alpha }(1+|x-y|_p) }$, $\alpha>1$,
\item (linear) $j\left(|x-y|_p\right) =\frac{1}{|x-y|_p^{\alpha +1}}$, $\alpha >0$,
\item (exponential) $j\left(|x-y|_p\right) =\frac{e^{-\alpha |x-y|_p}}{|x-y|_p}$, $\alpha >0$.
\end{enumerate}

The equations of the form
\begin{equation}
u_t(x,t)=(J\ast u-u)(x,t) = \int_{\mathbb{Q}_p^n} \hskip-0.22cm J(x-y) u(y,t) d^ny - u(x,t), \ x\in\mathbb{Q}_p^n, \ t\in [0,\infty), \label{Master_E2}
\end{equation}
and its connections with Stochastic processes on ultrametric spaces were recently studied in arbitrary dimensions, where the functions $J$ codify the structure of the transition density functions from the energy landscapes of the exponential type, see \cite{To-Z}.

As in the Archimedean case, as stated in \cite{Andreu-Vaillo et al} and \cite{Fife}, if $u(x,t)$ is thought of as a density at a point $x$ at time $t$ and $J(x-y)$ is thought of as the probability distribution of jumping from location $y$ to location $x$, then
\begin{equation*}
\int_{\mathbb{Q}_p^n} J(x-y)u(y,t)d^ny = (J\ast u)(x,t)
\end{equation*}
is the rate at which individuals are arriving at position $x$ from all other places and
\begin{equation*}
-u(x,t) = -\int_{\mathbb{Q}_p^n} J(x-y) u(y,t) d^ny
\end{equation*}
is the rate at which they are leaving location $x$ to travel to all other sites.

In this article we continue the study of the master equation (\ref{Master_E2}) naturally
associated to nonlocal operators $\mathcal{A}$\ defined as $\mathcal{A}
f:=J\ast f-f,$ with $f\in L^{\rho }\left(\mathbb{Q}_p^n\right) $, $1\leq \rho \leq \infty$.

An important fact that makes a huge difference between this article and the works \cite{Casas-Zuniga}, \cite{Ch-Z-1},
\cite{Galeano-Zuniga}, \cite{Gu-To-1}, \cite{Gu-To-2}, \cite{Zu}, et al., is that the heat Kernel $Z_t(x)$, $x\in\mathbb{Q}_p^n$, $t\geq 0$ associated
with the operator $\mathcal{A}$ corresponds to a $p$-adic distribution.

The article is organized as follows: In Section \ref{Fourier Analysis}, we will collect some basic results on the $p$-adic analysis and fix the notation that we will use through the article. In Section \ref{nonlocal operators}, we started showing some general results corresponding to radial probability density
functions $J:\mathbb{Q}_p^n\rightarrow \mathbb{R}_{+}$ that satisfies the Hypothesis A (see Definition \ref{Hypothesis A}). Subsequently, we will define a class of $p$-adic operators $\mathcal{A}$ associated with the function $J$.
These operators are pseudo-differential operators with symbol $1-(\mathcal{F}J)$, where $\mathcal{F}J$ denotes the Fourier transform of $J$. Some properties corresponding to the symbol $1-(\mathcal{F}J)$ will be shown. Moreover, throughout this paper, we will assume that $1-(\mathcal{F}J)(\|\xi\|_p)$ is a increasing function with respect to $\|\cdot\|_p$.

We also study the Cauchy problem
\begin{equation}\label{Cauchy_problem_0}
\begin{cases}
\frac{\partial u}{\partial t}(x,t)=\mathcal{A}u(x,t), & t\in
[0,\infty), \ x\in\mathbb{Q}_p^n \\
u(x,0)=u_0(x)
\end{cases}
\end{equation}
naturally associated to these operators, and we prove via the theory of distributions
some properties corresponding to the heat Kernel associated with the operators $\mathcal{A}$, see Theorem \ref{Z(t,x)_properties}. Also study some properties corresponding to the fundamental solution $u(x,t)$, $x\in \mathbb{Q}_p^n$, $t\geq 0$, of the Cauchy problem (\ref{Cauchy_problem_0}),
such as, the mass of the solution $u(x,t)$ is conserved, the comparison principle holds and some representations of $u(x,t)$ under certain conditions of the initial data $u_0(x)$. see Theorem \ref{theorem_classical_solution}.

Finally, in this section we will show that the family $\left(Z(t,\cdot)\right)_{t>0}$ determines a convolution
semigroup on $\mathbb{Q}_p^n$, see Theorem \ref{conv_semigroups}.
In Section \ref{Markov Processes}, we will begin by defining radial probability density functions of the linear and logarithmic types. These functions are defined on $\mathbb{Q}_p^n$. Then, we look conditions for that the Markov process $\mathfrak{J}(t,\omega )$ (with state space $(\mathbb{Q}_p^n,\|\cdot\|_p)$) associated with these transition functions are recurrent concerning  $\mathbb{Z}_p^n$ (The first passage time problem), see Theorem \ref{Theorem3}. The first passage time problem for the transition functions of the type exponential was studied in \cite{To-Z}.

Ultimately, we will study the probability of survival for the transition functions of the types linear and logarithmic.

\section{\label{Fourier Analysis}Fourier Analysis on $\mathbb{Q}_p^n$: Essential Ideas}

\subsection{Non-Archimedean topology of the $p$-adic numbers}

Any $p$-adic number $x\neq 0$ has a unique expansion of the form
\begin{equation*}
x=p^{ord_p(x)}\sum_{j=0}^{\infty }x_jp^j,
\end{equation*}
where $x_j\in \{0,1,2,\dots ,p-1\}$ and $x_0\neq 0$. By using this
expansion, we define \textit{the fractional part of }$x\in \mathbb{Q}_p$,
denoted $\{x\}_p$, as the rational number
\begin{equation*}
\{x\}_p =   \begin{cases}
0, & \text{if} \ x=0 \ \text{or} \ ord_p(x)\geq 0 \\
p^{ord_p(x)} \sum_{j=0}^{-ord_p(x)-1} x_jp^j, & \text{if} \     ord_p(x)<0.
    \end{cases}
\end{equation*}
For $r\in \mathbb{Z}$, denote by $B_r^n(a) = \{x \in \mathbb{Q}_p^n: \|x-a\|_p\leq p^r\}$
\textit{the ball of radius} $p^r$ \textit{with center at}
$a =(a_1,\ldots,a_n)\in \mathbb{Q}_p^n$, and take $B_r^n(0)=: B_r^n$. Note that $B_r^n(a) = B_r(a_1)\times \cdots \times B_r(a_n)$, where $B_r(a_{i}):=\{x\in \mathbb{Q}_p :  |x_i -a_i|_p \leq p^r\}$ is the one-dimensional ball of radius $p^r$ with center at $a_i \in \mathbb{Q}_p$. The ball $B_0^n$ equals the product of $n$ copies of $B_0=\mathbb{Z}_p$, \textit{the ring of
}$p$-\textit{adic integers of} $\mathbb{Q}_p$. We also denote by
$S_r^n(a)=\{x\in \mathbb{Q}_p^n : \|x-a\|_p=p^r\}$ \textit{the sphere of radius }$p^r$ \textit{with center at} $a=(a_1,\dots ,a_n)\in \mathbb{Q}_p^n$, and take $S_r^n(0)=:S_r^n$. The balls and spheres are both open and closed subsets in $\mathbb{Q}_p^n$.
As a topological space, $(\mathbb{Q}_p^n,\|\cdot \|_p)$ is totally disconnected, i.e., the only connected subsets of $\mathbb{Q}_p^n$ are the empty set and the points. A subset of $\mathbb{Q}_p^n$ is compact if and only if it is closed and bounded in $\mathbb{Q}_p^n$, see, e.g., \cite[Section 1.3]{V-V-Z}, or \cite[Section 1.8]{Alberio et al}. The balls and spheres are compact subsets. Thus $(\mathbb{Q}_p^n,\|\cdot \|_p) $ is a locally compact topological space.

We will use $\Omega \left( p^{-r}\|x-a\|_p\right)$ to denote the
characteristic function of the ball $B_r^n(a)$. We will use the notation $1_{A}$ for the characteristic function of a set $A$. Along the article $d^nx$ will denote a Haar measure on $\mathbb{Q}_p^n$ normalized so that $\int_{\mathbb{Z}_p^n}d^nx=1.$

\subsection{Some function spaces}

A complex-valued function $\varphi $ defined on $\mathbb{Q}_p^n$ is called \textit{locally constant} if for any $x\in\mathbb{Q}_p^n$ there
exist an integer $l(x)\in \mathbb{Z}$ such that
\begin{equation*}
\varphi (x+x') = \varphi(x) \text{ for } x'\in B_{l(x)}^n.
\end{equation*}
A function $\varphi :\mathbb{Q}_p^n\rightarrow \mathbb{C}$ is called a \textit{Bruhat-Schwartz function
(or a test function)} if it is locally constant with compact support. The $\mathbb{C}$-vector space of Bruhat-Schwartz functions is denoted by $
\mathcal{D}(\mathbb{Q}_p^n)=: \mathcal{D}$. Let $\mathcal{D}'(\mathbb{Q}_p^n)=: \mathcal{D}'$ denote the set of all continuous functional (distributions) on $\mathcal{D}$. The natural pairing
$\mathcal{D}' (\mathbb{Q}_p^n)\times
\mathcal{D}(\mathbb{Q}_p^n)\rightarrow \mathbb{C}$ is denoted as
$\langle T,\varphi \rangle $ for $T\in \mathcal{D}' (\mathbb{Q}_p^n)$ and $\varphi \in
\mathcal{D}(\mathbb{Q}_p^n)$, see e.g., \cite[Section
4.4]{Alberio et al}.

Every $f\in $ $L_{loc}^1(\mathbb{Q}_p^n)$ defines a
distribution $f\in \mathcal{D}' (\mathbb{Q}_p^n)$ by the formula
\begin{equation*}
\langle f,\varphi \rangle =\int\limits_{\mathbb{Q}_p^n}f(x) \varphi(x) d^nx.
\end{equation*}%
Such distributions are called \textit{regular distributions}.

Given $\rho \in \lbrack 0,\infty )$, we denote by $L^{\rho
}\left(\mathbb{Q}_p^n,d^nx\right) =L^{\rho
}\left(\mathbb{Q}_p^n\right) :=L^{\rho },$ the
$\mathbb{C}$-vector space of all the complex valued functions $g$
satisfying $\int_{\mathbb{Q}_p^n}\left\vert g\left( x\right)
\right\vert ^{\rho }d^nx<\infty $, $L^{\infty }\allowbreak
:=L^{\infty }\left(\mathbb{Q}_p^n\right) =L^{\infty
}\left(\mathbb{Q}_p^n,d^nx\right) $ denotes the
$\mathbb{C}$-vector space of all the complex valued functions $g$
such that the essential supremum of $|g|$ is bounded.

Let us denote by $C(\mathbb{Q}_p^n,\mathbb{C})=:C_{\mathbb{C}}$
the $\mathbb{C}$-vector space of all the complex valued functions which are continuous, by $
C(\mathbb{Q}_p^n,\mathbb{R})=:C_{\mathbb{R}}$ the
$\mathbb{R}$-vector space of continuous functions. Set
\begin{equation*}
C_0(\mathbb{Q}_p^n,\mathbb{C}):=\left\{
f : \mathbb{Q}_p^n\rightarrow \mathbb{C} ; f\text{ is
continuous and }\lim_{x\rightarrow \infty }f(x)=0\right\} ,
\end{equation*}%
where $\lim_{x\rightarrow \infty }f(x)=0$ means that for every $\epsilon >0$
there exists a compact subset $B(\epsilon )$ such that $|f(x)|<\epsilon $
for $x\in\mathbb{Q}_p^n\backslash B(\epsilon ).$ We recall that $(C_0(\mathbb{Q}_p^n,\mathbb{C}),\|\cdot \|_{L^{\infty }})$ is a Banach space.

\subsection{Fourier transform}
Set $\chi _p(y)=\exp (2\pi i\{y\}_p)$ for $y\in\mathbb{Q}_p$. The map $\chi _p(\cdot )$ is an additive character on $\mathbb{Q}_p$, i.e. a
continuous map from $\left(\mathbb{Q}_p,+\right) $ into $S$ (the unit circle considered as multiplicative
group) satisfying $\chi _p(x_0+x_1)=\chi _p(x_0)\chi _p(x_1)$,
$\left|\chi _p(x_0) \right|=1$, $x_0,x_1\in\mathbb{Q}_p$. Further, $\chi _p(0)=1$, $\chi _p(-x)=\overline{\chi _p(x)}$.
Here, $\overline{\chi _p}$ denotes the complex conjugate of $\chi _p$. \\
The additive characters of $\mathbb{Q}_p$ form an Abelian group which is isomorphic to $\left(\mathbb{Q}_p,+\right) $, the
isomorphism is given by $\xi \mapsto \chi _p(\xi x)$,
see e.g. \cite[Section 2.3]{Alberio et al}.

Given $x=(x_1,\dots,x_n),$ $\xi =(\xi_1,\dots ,\xi_n) \in\mathbb{Q}_p^n$, we set $x\cdot \xi :=\sum_{j=1}^nx_j\xi_j$. If $f\in L^1$ its Fourier transform is denoted and defined by
\begin{equation*}
(\mathcal{F}f)(\xi) = \int_{\mathbb{Q}_p^n}\chi _p(\xi \cdot x)f(x)d^nx,\quad \text{for }\xi \in
\mathbb{Q}_p^n.
\end{equation*}
The inverse Fourier transform is denoted by $\mathcal{F}^{-1}$.
The Fourier transform is a linear isomorphism from $\mathcal{D}(\mathbb{Q}_p^n)$ onto itself satisfying
\begin{equation}
(\mathcal{F}(\mathcal{F}f))(\xi )=f(-\xi ),  \label{FF(f)}
\end{equation}
for every $f\in \mathcal{D}(\mathbb{Q}_p^n),$ see e.g.
\cite[Section 4.8]{Alberio et al}. If $f\in L^2,$ its Fourier
transform is defined as
\begin{equation*}
(\mathcal{F}f)(\xi )=\lim_{k\rightarrow \infty }\int_{\|x\|\leq
p^{k}}\chi _p(\xi \cdot x)f(x)d^nx,\quad \text{for }\xi \in
\mathbb{Q}_p^n,
\end{equation*}
where the limit is taken in $L^2.$ We recall that the Fourier transform is
unitary on $L^2$, i.e. $\|f\|_{L^2}=\|\mathcal{F}f\|_{L^2}$ for $f\in
L^2$ and that (\ref{FF(f)}) is also valid in $L^2$, see e.g. \cite[Chapter III, Section 2]{Taibleson}.

\section{\label{nonlocal operators}A Class of Nonlocal p-adic Operators and its Heat Kernel.}

\begin{definition}[Hypothesis A]
\label{Hypothesis A} We say that a function
$J:\mathbb{Q}_p^n\rightarrow \mathbb{R}_{+}$ satisfies
Hypothesis A if the following hypotheses are met, which will be
assumed throughout this paper:
\begin{enumerate}[(i)]
\item $J$ is a radial function, i.e. $J(x)=J(\|x\|_p)$;
\item $J$ is a continuous function (a.e.) with
$\int_{\mathbb{Q}_p^n}J(\|x\|_p)d^nx=1$.
\end{enumerate}
\end{definition}
The above assumptions imply that $J$ is a radial probability
density.
\begin{lemma}\label{Lemma_J}
\label{lemma1}Assume that $J:$\textbf{\
}$\mathbb{Q}_p^n\rightarrow \mathbb{R}_{+}$ satisfies Hypothesis
A$.$ Then, the following assertions hold:
\begin{enumerate}[(i)]
\item $(\mathcal{F}J)(\xi )$ is a real-valued, radial (i.e. $(\mathcal{F}J)(\xi )= (\mathcal{F}J)(\|\xi \|_p)$), and continuous function,
satisfying $ (\mathcal{F}J)(0)=1$ and $|(\mathcal{F}J)(\|\xi
\|_p)| \leq 1$, for all $\xi \in \mathbb{Q}_p^n$;
\item For $\xi \in \mathbb{Q}_p^n\backslash \{0\}$,
\begin{equation*}
1-(\mathcal{F}J)(\|\xi \|_p) =\|\xi \|_p^{-n}\left\{J(p\|\xi
\|_p^{-1})p^n+(1-p^{-n})\sum_{j=2}^{\infty
}J(p^j\|\xi \|_p^{-1})p^{nj}\right\}.
\end{equation*}
\end{enumerate}
\end{lemma}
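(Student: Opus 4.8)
The plan is to establish the elementary analytic properties in part (i) first, and then obtain part (ii)---together with radiality---by an explicit spherical decomposition of the Fourier integral.

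For part (i) the value at the origin is immediate: $(\mathcal{F}J)(0)=\int_{\mathbb{Q}_p^n}J(\|x\|_p)\,d^nx=1$ by Hypothesis A(ii). Since $J\ge 0$ and $|\chi_p|\equiv 1$, the triangle inequality gives $|(\mathcal{F}J)(\xi)|\le\int_{\mathbb{Q}_p^n}J(\|x\|_p)\,d^nx=1$ for every $\xi$. To see that $\mathcal{F}J$ is real-valued I would compute $\overline{(\mathcal{F}J)(\xi)}=\int_{\mathbb{Q}_p^n}\chi_p(-\xi\cdot x)J(\|x\|_p)\,d^nx$ and perform the change of variables $x\mapsto -x$, which preserves both $\|x\|_p$ and the Haar measure, recovering $(\mathcal{F}J)(\xi)$. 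Continuity follows from $J\in L^1(\mathbb{Q}_p^n)$ (a consequence of $J\ge0$ and Hypothesis A(ii)), since the Fourier transform of an $L^1$ function is (uniformly) continuous. Radiality I will read off directly from the closed formula obtained in part (ii), because that expression depends on $\xi$ only through $\|\xi\|_p$.

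For part (ii) the key is to integrate sphere by sphere. Writing $\mathbb{Q}_p^n\setminus\{0\}=\bigcup_{m\in\mathbb{Z}}S_m^n$ and using that $J(\|x\|_p)=J(p^m)$ on $S_m^n$, I would express
\[
(\mathcal{F}J)(\xi)=\sum_{m\in\mathbb{Z}}J(p^m)\int_{S_m^n}\chi_p(\xi\cdot x)\,d^nx .
\]
The required building block is the ball integral $\int_{B_r^n}\chi_p(\xi\cdot x)\,d^nx=p^{nr}\,\Omega(p^r\|\xi\|_p)$, which factors over the coordinates and reduces to the one-dimensional identity $\int_{\mathbb{Z}_p}\chi_p(\xi_i x_i)\,dx_i=\Omega(|\xi_i|_p)$ after the scaling $x=p^{-r}u$. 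Subtracting, $\int_{S_m^n}\chi_p(\xi\cdot x)\,d^nx=p^{nm}\Omega(p^m\|\xi\|_p)-p^{n(m-1)}\Omega(p^{m-1}\|\xi\|_p)$. With $\|\xi\|_p=p^{\gamma}$ this sphere integral equals $p^{nm}(1-p^{-n})$ for $m\le-\gamma$, equals $-p^{-n\gamma}$ for $m=-\gamma+1$, and vanishes for $m\ge-\gamma+2$; hence $(\mathcal{F}J)(\xi)=(1-p^{-n})\sum_{m\le-\gamma}J(p^m)p^{nm}-p^{-n\gamma}J(p^{-\gamma+1})$.

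To finish I would invoke the normalization in the equivalent form $1=(1-p^{-n})\sum_{m\in\mathbb{Z}}J(p^m)p^{nm}$, obtained from Hypothesis A(ii) by the same spherical decomposition since the measure of $S_m^n$ is $p^{nm}(1-p^{-n})$. Subtracting collapses the full and partial sums into the single tail $(1-p^{-n})\sum_{m\ge-\gamma+1}J(p^m)p^{nm}$. Reindexing by $j=m+\gamma\ge1$ turns $p^{nm}$ into $\|\xi\|_p^{-n}p^{nj}$ and $J(p^m)$ into $J(p^j\|\xi\|_p^{-1})$; splitting off the $j=1$ term and combining it with the leftover $p^{-n\gamma}J(p^{-\gamma+1})=\|\xi\|_p^{-n}J(p\|\xi\|_p^{-1})$ via the identity $(1-p^{-n})p^n+1=p^n$ yields exactly the claimed expression. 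The main obstacle is less the algebra than the analytic bookkeeping: I must justify interchanging the summation over spheres with the integral (dominated convergence, using $J\ge0$ and $\sum_m J(p^m)p^{nm}<\infty$ from finiteness of the integral) and verify convergence of the tail series so that the reindexed formula is legitimate. Everything in part (i)---radiality and the real, bounded, normalized, and continuous character of $\mathcal{F}J$---is then confirmed by, or consistent with, this explicit formula.
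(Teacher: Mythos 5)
Your argument is correct, and for part (ii) it rests on the same two ingredients as the paper's proof: the decomposition of $\mathbb{Q}_p^n$ into spheres on which the radial function $J$ is constant, and the standard formula for $\int_{S_m^n}\chi_p(\xi\cdot x)\,d^nx$ (equal to $p^{nm}(1-p^{-n})$, $-p^{n(m-1)}$, or $0$ according to the three regimes). The organizational difference is where the ``$1$'' enters: the paper writes $1-(\mathcal{F}J)(\xi)=\int J(\|x\|_p)\{1-\chi_p(p^{L}x\cdot\xi_0)\}\,d^nx$ after normalizing $\xi=p^{L}\xi_0$ with $\|\xi_0\|_p=1$, so the integrand vanishes identically on $\mathbb{Z}_p^n$ and the computation is a single one-sided sum over $j\ge 1$ from the start; you instead expand $(\mathcal{F}J)(\xi)$ as a two-sided sum over all spheres and then subtract it from the normalization identity $1=(1-p^{-n})\sum_{m}J(p^m)p^{nm}$, which forces the extra reindexing and the $j=1$ recombination via $(1-p^{-n})p^n+1=p^n$ at the end. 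Both routes are valid; the paper's is slightly cleaner on convergence (everything is a nonnegative series from the outset), while yours makes the role of the normalization more transparent and, as you note, requires the absolute convergence $\sum_m J(p^m)p^{nm}<\infty$ (which indeed follows from $J\ge 0$ and $\int J=1$) to justify the sphere-by-sphere interchange and the cancellation of the two-sided sums. For part (i) you are more self-contained than the paper, which cites \cite[Example 8, p.~43]{V-V-Z} for real-valuedness and radiality and \cite{Taibleson} for continuity; your direct arguments (the substitution $x\mapsto -x$ for real-valuedness, $L^1$ Fourier theory for uniform continuity, and reading radiality off the explicit formula of part (ii)) are all sound, with the minor caveat that deducing radiality from (ii) only covers $\xi\ne 0$, the value at $0$ being handled separately by the normalization.
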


\begin{proof}
\begin{enumerate}[(i)]
\item The first two statements are obtained by applying the $n$-dimensional
version in \cite[Example 8, p. 43]{V-V-Z}. The continuity of $(\mathcal{F}J)(\|\xi \|_p)$ follows from\ \cite[Chapter II-1-Theorem 1.1-(b)]{Taibleson}. The
affirmations
$|(\mathcal{F}J)(\|\xi \|_p)|\leq 1$ and $(\mathcal{F}J)(0) =1$, are a direct consequence of the fact that $\int_{\mathbb{Q}_p^n}J(\|x\|_p)d^nx=1$.

\item Let $\xi=p^{L}\xi_0\neq 0$, with $L=ord_p(\xi)\in \mathbb{Z}$ and $\|\xi_0\|_p=1$. Making the change of variable $z=p^{L}x$ we have that
\begin{align*}
1-(\mathcal{F}J)(\|\xi\|_p)&=\int_{\mathbf{\mathbb{Q}}_p^n}J(\|x\|_p)\left\{1-\chi_p\left(
p^{L}x\cdot \xi_0\right)\right\}d^nx\\
&=\|\xi \|_p^{-n}\int_{\mathbf{\mathbb{Q}}_p^n}J(\|\xi\|_p^{-1}\|z\|_p)\left\{1-\chi_p\left(
z\cdot \xi_0\right)\right\}d^nz\\
&=\|\xi \|_p^{-n}\int_{\mathbf{\mathbb{Q}}_p^n\setminus \mathbf{\mathbb{Z}}_p^n}J(\|\xi\|_p^{-1}\|z\|_p)\left\{1-\chi_p\left(
z\cdot \xi_0\right)\right\}d^nz\\
&=\|\xi \|_p^{-n}\sum_{j=1}^{\infty}J(\|\xi\|_p^{-1}p^j)\int_{\|z\|_p=p^j}\left\{1-\chi_p\left(
z\cdot \xi_0\right)\right\}d^nz\\
&=\|\xi \|_p^{-n}\sum_{j=1}^{\infty}J(\|\xi\|_p^{-1}p^j)\int_{\|p^jz\|_p=1}\left\{1-\chi_p\left(
z\cdot \xi_0\right)\right\}d^nz\\
&=\|\xi \|_p^{-n}\sum_{j=1}^{\infty}J(\|\xi\|_p^{-1}p^j)p^{nj}\int_{\|w\|_p=1}\left\{1-\chi_p\left(
p^{-j}\xi_0\cdot w\right)\right\}d^nw.
\end{align*}
By using the formula
\begin{equation*}
\int_{\|w\|_p=1}\chi_p\left(p^{-j}\xi_0\cdot w\right)
d^nw=\left\{
\begin{array}{lll}
1-p^{-n}, & \text{if} & \text{ }j\leq 0, \\
-p^{-n}, & \text{if } & \text{\ }j=1, \\
0, & \text{if} & j\geq 2,
\end{array}
\right.
\end{equation*}
we have that
\begin{equation*}
\int_{\|w\|_p=1}\left\{1-\chi_p\left(p^{-j}\xi_0\cdot w\right)\right\}
d^nw=\left\{
\begin{array}{lll}
0, & \text{if} & \text{ }j\leq 0, \\
1, & \text{if } & \text{\ }j=1, \\
1-p^{-n}, & \text{if} & j\geq 2.
\end{array}
\right.
\end{equation*}
Therefore,
\begin{equation*}
1-(\mathcal{F}J)(\|\xi \|_p) =\|\xi \|_p^{-n}\left\{J(p\|\xi
\|_p^{-1})p^n+(1-p^{-n})\sum_{j=2}^{\infty
}J(p^j\|\xi \|_p^{-1})p^{nj}\right\}.
\end{equation*}
\end{enumerate}
\end{proof}

\begin{remark}\label{obs_J}
\begin{enumerate}[(i)]
\item By previous lemma we have that $|(\mathcal{F}J)(\|\xi
\|_p)| \leq 1$. This fact implies that $0\leq 1-
(\mathcal{F}J)(\|\xi \|_p)\leq 2$, for all $\xi\in
\mathbb{Q}_p^n$.
\item A function $f:\mathbb{Q}_p^n\rightarrow  \mathbb{C}$ is called negative definite, if$\ $
\begin{equation*}
\sum_{i,j=1}^{m}\left( f(x_{i})+\overline{f(x_j)}
-f(x_{i}-x_j)\right) \lambda _{i}\overline{\lambda _j}\geq 0
\end{equation*}%
for all $x_1,\ldots ,x_{m}\in $ $\mathbb{Q}_p^n,$ $\lambda
_1,\ldots ,\lambda _{m}$ $\in $ $\mathbb{C}$, $m\in \mathbb{N}$.
We have that the function $(\mathcal{F}J)(0) - (\mathcal{F}J)(\|\xi
\|_p) = 1 - (\mathcal{F}J)(\|\xi \|_p)$ is negative definite,
see e.g. \cite[Remark 3-(ii)]{To-Z} and \cite[Example 3.4]{To-z-2}.
\end{enumerate}
\end{remark}

Throughout this paper (unless otherwise stated) we will assume that
$1-(\mathcal{F}J)(\|\xi \|_p)$ is a increasing function with
respect to $\|\cdot\|_p$. In the section \ref{Markov Processes} we will show examples of transition functions $J$ for
which $1-(\mathcal{F}J)(\|\xi \|_p)$ is a increasing function with respect to $\|\cdot\|_p$.

\begin{remark}\label{Obs_operator}\cite[Section 4.2]{To-Z}
For $f\in L^{\rho }\left(\mathbb{Q}_p^n\right) $ with $1\leq
\rho \leq \infty ,$\ we define $$\mathcal{A}f:=J\ast f-f.$$ Then,
for any $1\leq \rho \leq \infty $, $\mathcal{A}:L^{\rho
}\longrightarrow L^{\rho }$ gives rise a well-defined linear bounded
operator. Indeed, by the Young inequality
\begin{equation*}
\|\mathcal{A}f\|_{L^\rho}\leq \|J\ast f\|_{L^\rho} + \|f\|_{L^\rho} \leq \|J\|_{L^1}\|f\|_{L^\rho}+\|f\|_{L^{\rho }}\leq 2\|f\|_{L^\rho}.
\end{equation*}
Consider $\mathcal{A}:L^2(\mathbb{Q}_p^n) \rightarrow L^2(\mathbb{Q}_p^n)$ given by
\begin{equation*}
\mathcal{A}f(x) =-\mathcal{F}^{-1} ((1- (\mathcal{F}J)(\|\xi\|_p)) \mathcal{F}f) ,
\end{equation*}
and the Cauchy problem:
\begin{equation}
\begin{cases}
\frac{\partial u}{\partial t}(x,t)=\mathcal{A}u(x,t), & t\in
\left[ 0,\infty \right) \text{,\ }x\in\mathbb{Q}_p^n \\
u(x,0)=u_0(x)\in \mathcal{D}(\mathbb{Q}_p^n)\text{.} &
\end{cases}
 \label{Cauchy_problem}
\end{equation}

Then
\begin{equation}
u(x,t)=\int_{\mathbf{\mathbb{Q}}_p^n}\chi_p\left( -\xi \cdot
x\right) e^{-t(1-(\mathcal{F}J)(\|\xi \|_p))}
(\mathcal{F}u_0)(\xi )d^n\xi \label{classical_solution}
\end{equation}
is a classical solution of \eqref{Cauchy_problem}. In addition,
$u(\cdot ,t)$ is a continuous function for any $t\geq 0$.
\end{remark}

We define the heat Kernel (or classical solution) attached to
operator $\mathcal{A}$ as
\begin{equation}\label{def_Z(x,t)}
Z(x,t):=\left(\mathcal{F}^{-1}e^{-t(1-(\mathcal{F}J)(\|\xi
\|_p))}\right)(x), \ \ x\in\mathbb{Q}_p^n, \ \ t\geq 0.
\end{equation}
When $t=0$ we get that $Z(x,t)=\delta \in
\mathcal{D}'(\mathbb{Q}_p^n)$, see \cite[Example
4.9.1]{Alberio et al}. On the other hand, for $t>0$ we have that
$e^{-t(1-(\mathcal{F}J)(\|\xi \|_p))}\notin
L^1(\mathbf{\mathbb{Q}}_p^n)$ and
$e^{-t(1-(\mathcal{F}J)(\|\xi \|_p))}\in
L_{loc}^1(\mathbf{\mathbb{Q}}_p^n)$, so that by \cite[Chapter
IV]{Alberio et al} we have that $e^{-t(1-(\mathcal{F}J)(\|\xi
\|_p))}$ determine a regular distribution and
$\mathcal{F}^{-1}(e^{-t(1-(\mathcal{F}J)(\|\xi \|_p))})(x)\in
\mathcal{D}'(\mathbb{Q}_p^n)$. Therefore, we
have that $Z(x,t)\in \mathcal{D}^{\prime }(\mathbb{Q}_p^n)$, for $x\in\mathbb{Q}_p^n$ and $t\geq 0$.

When considering $Z(x,t)$ as a function of $x$ for $t$ fixed, we
will write $Z_t(x).$

By (\ref{classical_solution}) and (\ref{def_Z(x,t)}), we have that
the classical solution of the Cauchy problem (\ref{Cauchy_problem})
satisfies
\begin{equation*}
u(x,t)=Z_t(x)\ast u_0(x), \ \ x\in\mathbb{Q}_p^n, \ \ t\geq
0.
\end{equation*}
For more details the author can consult \cite{To-Z}.\\
Significantly, the following results are obtained by considering our heat Kernel $Z_t(x)$, $x\in\mathbb{Q}_p^n$, $t\geq 0$, as a distribution. This fact
makes a big difference between this article and the works \cite{Casas-Zuniga}, \cite{Ch-Z-1}, \cite{Galeano-Zuniga}, \cite{Gu-To-1}, \cite{Gu-To-2}, \cite{To-Z},
\cite{Zu}, et al.

\begin{lemma} \label{TF_Z(t,x)}
For $\xi \in \mathbb{Q}_p^n$ and $t\geq 0$ we
have that
\begin{equation*}
(\mathcal{F}Z_t(x))(\xi)=e^{-t(1-(\mathcal{F}J)(\|\xi \|_p))}.
\end{equation*}
\end{lemma}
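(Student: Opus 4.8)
The plan is to prove the identity purely by duality, treating $Z_t$ as the distribution it was defined to be and invoking the Fourier inversion formula on $\mathcal{D}'(\mathbb{Q}_p^n)$. Write $g_t(\xi):=e^{-t(1-(\mathcal{F}J)(\|\xi\|_p))}$ for fixed $t\geq 0$. As already observed in the text preceding the lemma, for $t>0$ the function $g_t$ lies in $L_{loc}^1(\mathbb{Q}_p^n)$ and hence determines a regular distribution, while for $t=0$ we have $g_0\equiv 1$; in either case $g_t\in\mathcal{D}'(\mathbb{Q}_p^n)$, and by the very definition \eqref{def_Z(x,t)} we have $Z_t=\mathcal{F}^{-1}g_t$. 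Thus the whole content of the statement is the assertion that $\mathcal{F}\mathcal{F}^{-1}g_t=g_t$ in the sense of distributions.

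First I would recall that the distributional Fourier transform is defined by the pairing $\langle \mathcal{F}T,\varphi\rangle=\langle T,\mathcal{F}\varphi\rangle$ for $T\in\mathcal{D}'$ and $\varphi\in\mathcal{D}$, and likewise for $\mathcal{F}^{-1}$. Then the computation is immediate: for every test function $\varphi\in\mathcal{D}(\mathbb{Q}_p^n)$,
\begin{equation*}
\langle \mathcal{F}Z_t,\varphi\rangle=\langle Z_t,\mathcal{F}\varphi\rangle=\langle \mathcal{F}^{-1}g_t,\mathcal{F}\varphi\rangle=\langle g_t,\mathcal{F}^{-1}(\mathcal{F}\varphi)\rangle=\langle g_t,\varphi\rangle,
\end{equation*}
where the last equality uses $\mathcal{F}^{-1}\mathcal{F}\varphi=\varphi$ on $\mathcal{D}$, which is a restatement of the inversion property \eqref{FF(f)} valid on Bruhat--Schwartz functions. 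Since this holds for all $\varphi$, we conclude $\mathcal{F}Z_t=g_t$ in $\mathcal{D}'(\mathbb{Q}_p^n)$, which is exactly the claim $(\mathcal{F}Z_t)(\xi)=e^{-t(1-(\mathcal{F}J)(\|\xi\|_p))}$.

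For the degenerate case $t=0$ one may argue directly as a consistency check: there $Z_0=\delta$ (as recorded after \eqref{def_Z(x,t)}) and $(\mathcal{F}\delta)(\xi)=1=g_0(\xi)$, in agreement with the general computation. I do not expect any genuine obstacle here: the only points requiring attention are that $g_t$ truly defines an element of $\mathcal{D}'$ so that all the pairings above are legitimate, and that the inversion $\mathcal{F}^{-1}\mathcal{F}=\mathrm{Id}$ transfers from $\mathcal{D}$ to its dual by the defining duality of the distributional transform. Both facts are standard and already used in the surrounding discussion, so the substance of the lemma reduces to the bookkeeping of which object is the (inverse) transform of which.
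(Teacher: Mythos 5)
Your proposal is correct and follows essentially the same route as the paper: both reduce the claim to the duality definition of the distributional Fourier transform together with the inversion formula on test functions. If anything, your version is the cleaner of the two, since the paper writes the pairing as an integral $\int Z_t(x)(\mathcal{F}\varphi)(x)\,d^nx$ and invokes Fubini, which tacitly treats the distribution $Z_t$ as a regular one, whereas your chain $\langle \mathcal{F}^{-1}g_t,\mathcal{F}\varphi\rangle=\langle g_t,\mathcal{F}^{-1}\mathcal{F}\varphi\rangle=\langle g_t,\varphi\rangle$ stays entirely at the level of distributions.
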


\begin{proof}
Let $\varphi \in \mathcal{D}(\mathbb{Q}_p^n)$. Then by
\cite[Section 4.4]{Alberio et al} and using Fubini's we have that
\begin{align*}
\langle\mathcal{F}Z_t(x),\varphi\rangle &= \langle Z_t(x),\mathcal{F}\varphi \rangle\\
&=\int_{\mathbf{\mathbb{Q}}_p^n}Z_t(x)(\mathcal{F}\varphi)(x)d^nx\\
&=\int_{\mathbf{\mathbb{Q}}_p^n}\left[\int_{\mathbf{\mathbb{Q}}_p^n}\chi_p\left(
x \cdot \xi\right)Z_t(x)d^nx\right]\varphi(\xi)d^n\xi\\
&=\int_{\mathbf{\mathbb{Q}}_p^n}e^{-t(1-(\mathcal{F}J)(\|\xi
\|_p))}\varphi(\xi)d^n\xi\\
&=\left\langle e^{-t(1-(\mathcal{F}J)(\|\xi \|_p))},\varphi\right\rangle.
\end{align*}
\end{proof}

\begin{theorem} \label{Z(t,x)_properties}
The following assertions holds for $Z_t(x)$, with
$x\in\mathbb{Q}_p^n$ and $t\geq 0$:
\begin{enumerate}[(i)]
\item $Z_t(x)\geq 0$;
\item $\int_{\mathbf{\mathbb{Q}}_p^n}Z_t(x)d^nx=1$;
\item $Z_{t+s}(x)=\left(Z_t\ast Z_{s}\right)(x)$, $s\geq 0$.
\end{enumerate}
\end{theorem}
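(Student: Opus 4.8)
The plan is to transfer everything to the Fourier side, where Lemma~\ref{TF_Z(t,x)} supplies the explicit symbol $(\mathcal{F}Z_t)(\xi)=e^{-t\psi(\xi)}$ with $\psi(\xi):=1-(\mathcal{F}J)(\|\xi\|_p)$. By Lemma~\ref{Lemma_J}(i) the function $\psi$ is continuous and real-valued; by Remark~\ref{obs_J}(i) it satisfies $0\leq\psi\leq 2$; and by Remark~\ref{obs_J}(ii) it is negative definite, with $\psi(0)=1-(\mathcal{F}J)(0)=0$. All three assertions will be read off from these facts, with $t=0$ (where $Z_0=\delta$) treated as the trivial boundary case.

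For (i) the key is Schoenberg's theorem: since $\psi$ is negative definite with $\psi(0)=0$, the function $e^{-t\psi}$ is positive definite for every $t\geq 0$. As $e^{-t\psi}$ is moreover continuous and bounded (because $0\leq\psi\leq 2$), the $p$-adic Bochner theorem provides a finite positive measure $\mu_t$ on $\mathbb{Q}_p^n$ with $\mathcal{F}\mu_t=e^{-t\psi}$. Comparing with Lemma~\ref{TF_Z(t,x)} and using injectivity of the Fourier transform on $\mathcal{D}'(\mathbb{Q}_p^n)$, I would identify $Z_t=\mathcal{F}^{-1}(e^{-t\psi})=\mu_t$; thus $Z_t$ is a positive measure, which is the precise meaning of $Z_t\geq 0$.

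Once $Z_t=\mu_t$ is known to be a finite positive measure, (ii) follows by evaluating the symbol at the origin: $\int_{\mathbb{Q}_p^n}Z_t(x)\,d^nx=\mu_t(\mathbb{Q}_p^n)=(\mathcal{F}\mu_t)(0)=e^{-t\psi(0)}=1$, using $\psi(0)=0$. For (iii) I would again pass through the Fourier transform. Since $Z_t$ and $Z_s$ are finite positive measures, the convolution $Z_t\ast Z_s$ is well defined and satisfies $\mathcal{F}(Z_t\ast Z_s)=(\mathcal{F}Z_t)(\mathcal{F}Z_s)=e^{-t\psi}e^{-s\psi}=e^{-(t+s)\psi}=\mathcal{F}Z_{t+s}$, whence injectivity of $\mathcal{F}$ yields $Z_{t+s}=Z_t\ast Z_s$.

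The main obstacle is (i): making positivity rigorous at the level of distributions. For $t>0$ the symbol $e^{-t\psi}$ lies in $L^1_{loc}(\mathbb{Q}_p^n)$ but not in $L^1(\mathbb{Q}_p^n)$, so a priori $Z_t$ is merely a distribution and no pointwise argument is available; it is precisely the Schoenberg--Bochner route that upgrades $Z_t$ to an honest positive (indeed probability) measure, after which (ii) and (iii) become essentially formal. Care is also needed to check that the hypotheses of the $p$-adic Bochner theorem are genuinely met (continuity and boundedness of $e^{-t\psi}$) and that the identity $\mathcal{F}(\mu\ast\nu)=\mathcal{F}\mu\,\mathcal{F}\nu$ is invoked for finite measures rather than for arbitrary distributions.
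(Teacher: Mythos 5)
Your argument is correct, but it takes a genuinely different route from the paper's. The paper proves (i) by brute force: it expands the inverse Fourier integral over spheres, uses the standard formula for $\int_{\|y\|_p=p^j}\chi_p(-x_0\cdot y)\,d^ny$, and arrives at the explicit series
$Z_t(x)=\|x\|_p^{-n}(1-p^{-n})\sum_{j\geq 0}p^{-nj}\bigl\{e^{-t(1-(\mathcal{F}J)(p^{-j}\|x\|_p^{-1}))}-e^{-t(1-(\mathcal{F}J)(p\|x\|_p^{-1}))}\bigr\}$ for $x\neq 0$, whose terms are nonnegative precisely because of the standing assumption that $1-(\mathcal{F}J)(\|\xi\|_p)$ is increasing in $\|\cdot\|_p$; parts (ii) and (iii) are then done by evaluating $\mathcal{F}Z_t$ at $0$ and by a Fubini computation against test functions. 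Your Schoenberg--Bochner route bypasses the monotonicity hypothesis entirely (negative definiteness of $1-\mathcal{F}J$ holds already by Remark~\ref{obs_J}(ii)), and it is essentially the argument the paper itself deploys later, in Theorem~\ref{conv_semigroups}, citing Berg--Forst; it is also arguably cleaner for (ii), where the paper's step $(\mathcal{F}Z_t)(0)=\int Z_t$ tacitly treats the distribution $Z_t$ as an integrable function. What your approach gives up is the explicit series representation of $Z_t(x)$ off the origin, which the paper extracts as a by-product of its proof of (i) and reuses immediately afterwards (e.g.\ for the pointwise bound on $Z_t(x)$ in the following remark); correspondingly, your conclusion in (i) is positivity of $Z_t$ as a measure rather than pointwise nonnegativity of an explicitly computed function, and you should note that identifying $\mu_t$ with $Z_t$ requires the (standard) compatibility between the distributional Fourier transform of Lemma~\ref{TF_Z(t,x)} and the Fourier--Stieltjes transform of a finite measure. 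With those caveats made explicit, your proof is complete and, if anything, works under weaker hypotheses than the paper's.
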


\begin{proof}
\begin{enumerate}[(i)]
\item If $x=0$ the statement is immediate. If $x \neq 0$ and $t=0$, then by \cite[Example 9, p.
44]{V-V-Z} we have that $Z_t(x)=0$.

Let $t>0$ and $x=p^{L}x_0\neq 0$, with $L=ord_p(x)$ and
$\|x_0\|_p=1$. By changing variables as $y=p^{L}\xi$, we have
that
\begin{align*}
Z_t(x) &=\int_{\mathbf{\mathbb{Q}}_p^n}\chi_p\left(
-p^{L}x_0\cdot \xi \right) e^{-t(1-(\mathcal{F}J)(\|\xi
\|_p))}d^n\xi \\
&=\|x\|_p^{-n}\int_{\mathbf{\mathbb{Q}}_p^n}\chi_p\left(-x_0\cdot y
\right)e^{-t(1-(\mathcal{F}J)(p^{L}\|y\|_p))}d^ny.
\end{align*}
Then
\begin{align*}
Z_t(x)\|x\|_p^n & = \int_{\mathbf{\mathbb{Z}}_p^n} e^{-t(1-(\mathcal{F}J)(p^{L}\|y\|_p))}d^ny + \int_{\mathbf{\mathbb{Q}}_p^n \setminus         \mathbf{\mathbb{Z}}_p^n} \hspace{-0.2cm} \chi_p\left(-x_0\cdot y \right)e^{-t(1-(\mathcal{F}J)(p^{L}\|y\|_p))}d^ny.
\end{align*}
Expanding the previous integrals, we have that $Z_t(x)\|x\|_p^n$
it's exactly
\begin{equation}
(1-p^{-n})\sum_{j=0}^{\infty}p^{-nj}e^{-t(1-(\mathcal{F}J)(p^{L-j}))}+\sum_{j=1}^{\infty}e^{-t(1-(\mathcal{F}J)(p^{L+j}))}
\hspace{-0.4cm}\underset{\|y\|_p=p^j}\int\hspace{-0.3cm}\chi_p\left(-x_0\cdot
y \right)d^ny. \label{exp Z(x,t)}
\end{equation}

By using the formula
\begin{equation*}
\int_{\|y\|_p=p^j}\chi_p\left(-x_0\cdot y\right)
d^ny=\left\{
\begin{array}{lll}
p^{nj}(1-p^{-n}), & \text{if} & \text{ }1\leq p^{-j}, \\
-p^{n(j-1)}, & \text{if } & \text{\ }1=p^{-j+1}, \\
0, & \text{if} & \ 1\geq p^{-j+2},
\end{array}
\right.
\end{equation*}
and (\ref{exp Z(x,t)}), we get that
\begin{equation}
Z_t(x)=\|x\|_p^{-n}\left\{(1-p^{-n})\sum_{j=0}^{\infty}p^{-nj}e^{-t(1-(\mathcal{F}J)(p^{-j}\|x\|_p^{-1}))}-e^{-t(1-(\mathcal{F}J)(p\|x\|_p^{-1}))}
\right\},
\end{equation} \label{exp11 Z(x,t)}
and since $(1-p^{-n})\sum_{j=0}^{\infty}p^{-nj}=1$, we have that
\begin{equation}
Z_t(x)=\|x\|_p^{-n}(1-p^{-n})\sum_{j=0}^{\infty}p^{-nj}\left\{e^{-t(1-(\mathcal{F}J)(p^{-j}\|x\|_p^{-1}))}-e^{-t(1-(\mathcal{F}J)(p\|x\|_p^{-1}))}
\right\}. \label{exp2 Z(x,t)}
\end{equation}

As $1-(\mathcal{F}J)(\|\xi \|_p)$ is a increasing function with
respect to $\|\cdot\|_p$, then for $j\in \mathbf{\mathbb{N}}_0$
we have that $1-(\mathcal{F}J)(p^{-j}\|x\|_p^{-1})\leq
1-(\mathcal{F}J)(p\|x\|_p^{-1})$. This fact and Remark
\ref{obs_J}-(i) implies that
\begin{equation} \label{desi_1-J}
0\leq
e^{-t(1-(\mathcal{F}J)(p^{-j}\|x\|_p^{-1}))} - e^{-t(1-(\mathcal{F}J)(p\|x\|_p^{-1}))} \leq 1.
\end{equation}
Therefore, by (\ref{exp2 Z(x,t)}) we have that $Z_t(x)\geq 0$.
\item By Lemma \ref{Lemma_J}-(i) and Lemma \ref{TF_Z(t,x)} we have
that $(\mathcal{F}Z_t(x))(0)=1$ and
$(\mathcal{F}Z_t(x))(\xi)=\int_{\mathbf{\mathbb{Q}}_p^n}\chi_p\left(
\xi \cdot x\right)Z_t(x)d^nx$. So that
$(\mathcal{F}Z_t(x))(0)=\int_{\mathbf{\mathbb{Q}}_p^n}Z_t(x)d^nx$.
Therefore, $\int_{\mathbf{\mathbb{Q}}_p^n}Z_t(x)d^nx=1$.
\item Let $\varphi \in \mathcal{D}(\mathbb{Q}_p^n)$. Then by
\cite[Section 4.4]{Alberio et al} and using Fubini's we have that
\begin{align*}
\left\langle Z_{t+s}(x),\varphi\right\rangle
&=\int_{\mathbf{\mathbb{Q}}_p^n}Z_{t+s}(x)\varphi(x)d^nx\\
&=\int_{\mathbf{\mathbb{Q}}_p^n}\left[\int_{\mathbf{\mathbb{Q}}_p^n}\chi_p\left(
-x \cdot \xi\right)e^{-(t+s)(1-(\mathcal{F}J)(\|\xi\|_p))}d^n\xi\right]\varphi(x)d^nx \\
&=\int_{\mathbf{\mathbb{Q}}_p^n}\left(Z_t\ast Z_{s}\right)(x)\varphi(x)d^nx \\
&=\left\langle Z_t\ast Z_{s},\varphi \right\rangle.
\end{align*}
\end{enumerate}
\end{proof}

\begin{remark}
The fact that $1-(\mathcal{F}J)(\|\xi \|_p)$ is a increasing
function with respect to $\|\cdot\|_p$ marks a difference between
our heat Kernel and the heat Kernel treated in \cite{To-Z}, since by
(\ref{exp2 Z(x,t)}), (\ref{desi_1-J}) and that
$(1-p^{-n})\sum_{j=0}^{\infty}p^{-nj}=1$ we have that $Z_t(x)\leq
\|x\|_p^{-1}$, for $x\in
\mathbf{\mathbb{Q}}_p^n\backslash\{0\}$ and $t>0$. Moreover, to
know explicitly our function $Z_t(x)$, without the need to use an
auxiliary function $\widetilde{Z}(x,t)$ as in \cite{To-Z}.
\end{remark}

\begin{theorem}\label{theorem_classical_solution}
\begin{enumerate}[(i)]
\item The classical solution $u(x,t)$, $x\in \mathbf{\mathbb{Q}}$,  $t\geq 0$ given in (\ref{classical_solution}), satisfies
\begin{equation*}
\int_{\mathbf{\mathbb{Q}}_p^n} u(x,t) d^nx = \int_{\mathbf{\mathbb{Q}}_p^n} u_0(x) d^nx,
\end{equation*}
where $u_0(x)$ is the initial condition of the Cauchy problem
(\ref{Cauchy_problem}).
\item (Comparison Principle) Let $u(x,t)$ and $v(x,t)$, $x\in \mathbf{\mathbb{Q}}_p^n$,  $t\geq
0$, be fundamental solutions of the Cauchy problem
(\ref{Cauchy_problem}) with initial data $u_0$ and $v_0$
respectively. If $u_0(x)\geq v_0(x)$ for all $x\in
\mathbf{\mathbb{Q}}_p^n$, then
\begin{equation*}
u(x,t)\geq v(x,t) \ \  \text{for all} \ \ (x,t)\in
\mathbf{\mathbb{Q}}_p^n\times [0,\infty).
\end{equation*}
\item If the initial condition $u_0(x)$ of the Cauchy problem
(\ref{Cauchy_problem}) is a radial function such that $supp(\mathcal{F}u_0)=B_{M}^n$, $M\in \mathbb{Z}$, then $u(x,t)$ it's exactly\\
$-\left(1-\Omega \left( p^{M}\|x\|_p\right)\right)\|x\|^{-n}e^{-t(1-(\mathcal{F}J)(p\|x\|_p^{-1}))}\left(\mathcal{F}u_0\right)(p\|x\|_p^{-1})+$\\
$+(1-p^{-n})\sum_{j=K}^{\infty}p^{-nj}e^{-t(1-(\mathcal{F}J)(p^{-j}))}\left(\mathcal{F}u_0\right)(p^{-j}),$
where
\begin{equation*}
K=
\begin{cases}
-M, & \text{if}  \text{ }\|x\|_p\leq p^{-M}, \\
-M+1, & \text{if }  \text{\ }\|x\|_p=p^{-M+1}, \\
-ord_p(x), & \text{if}  \text{\ }\|x\|_p\geq p^{-M+2}.
\end{cases}
\end{equation*}
\end{enumerate}
\end{theorem}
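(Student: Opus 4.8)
The plan is to dispatch the three parts in increasing order of difficulty, using throughout the convolution identity $u(\cdot,t)=Z_t\ast u_0$ together with the heat-kernel properties of Theorem \ref{Z(t,x)_properties}, and reserving the real computation for part (iii). For part (i) I would pass to the Fourier side: by (\ref{classical_solution}) and Lemma \ref{TF_Z(t,x)} the solution satisfies $(\mathcal{F}u(\cdot,t))(\xi)=e^{-t(1-(\mathcal{F}J)(\|\xi\|_p))}(\mathcal{F}u_0)(\xi)$. Evaluating at $\xi=0$ and invoking $(\mathcal{F}J)(0)=1$ from Lemma \ref{Lemma_J}-(i) gives $(\mathcal{F}u(\cdot,t))(0)=(\mathcal{F}u_0)(0)$; since $(\mathcal{F}g)(0)=\int_{\mathbb{Q}_p^n}g\,d^nx$ by the definition of $\mathcal{F}$, this is exactly the stated conservation of mass. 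Equivalently one may write $u=Z_t\ast u_0$, use $\int_{\mathbb{Q}_p^n}Z_t\,d^nx=1$ from Theorem \ref{Z(t,x)_properties}-(ii), and apply Fubini to the convolution.

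For part (ii) I would argue by linearity. Setting $w_0:=u_0-v_0\geq 0$ and $w:=u-v$, the linearity of $\mathcal{A}$ and of (\ref{classical_solution}) shows that $w$ is the solution with initial datum $w_0$, so $w(x,t)=(Z_t\ast w_0)(x)=\int_{\mathbb{Q}_p^n}Z_t(x-y)w_0(y)\,d^ny$. Since $Z_t\geq 0$ by Theorem \ref{Z(t,x)_properties}-(i) and $w_0\geq 0$, the integrand is nonnegative, whence $w\geq 0$, i.e. $u\geq v$.

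Part (iii) carries the weight of the theorem. Starting from (\ref{classical_solution}), I would first note that since $u_0$ is radial so is $\mathcal{F}u_0$ (as in Lemma \ref{Lemma_J}-(i)), and by hypothesis it is supported on $B_M^n$; hence both $\mathcal{F}u_0$ and $e^{-t(1-(\mathcal{F}J)(\|\xi\|_p))}$ are constant on each sphere $S_m^n=\{\|\xi\|_p=p^m\}$. Slicing the integral into spheres reduces $u(x,t)$ to $\sum_{m\leq M}e^{-t(1-(\mathcal{F}J)(p^m))}(\mathcal{F}u_0)(p^m)\int_{S_m^n}\chi_p(-\xi\cdot x)\,d^n\xi$, and the substitution $m=-j$ turns this into a sum over $j\geq -M$. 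The computation then rests on the standard sphere integral
\[
\int_{\|\xi\|_p=p^{-j}}\chi_p(-\xi\cdot x)\,d^n\xi=
\begin{cases}
p^{-nj}(1-p^{-n}), & \|x\|_p\leq p^{j},\\
-p^{-n(j+1)}, & \|x\|_p= p^{j+1},\\
0, & \|x\|_p\geq p^{j+2},
\end{cases}
\]
which is the same identity already used in the proof of Theorem \ref{Z(t,x)_properties}.

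The only genuinely delicate point, and the step I expect to be the main obstacle, is the bookkeeping of two competing cutoffs: the support cutoff $j\geq -M$ coming from $\mathcal{F}u_0$, and the oscillatory cutoff $j\geq -ord_p(x)-1$ coming from the sphere integral. The first-case terms ($\|x\|_p\leq p^{j}$) assemble into the geometric main sum $(1-p^{-n})\sum p^{-nj}e^{-t(1-(\mathcal{F}J)(p^{-j}))}(\mathcal{F}u_0)(p^{-j})$, whose starting index is $\max(-M,-ord_p(x))$; this maximum is precisely $K$, and comparing $-M$ with $-ord_p(x)$ produces the three listed cases. The single second-case term ($\|x\|_p=p^{j+1}$) occurs at $j=-ord_p(x)-1$; there $\|x\|_p^{-n}=p^{-n(j+1)}$ and $p\|x\|_p^{-1}=p^{-j}$, so it equals $-\|x\|_p^{-n}e^{-t(1-(\mathcal{F}J)(p\|x\|_p^{-1}))}(\mathcal{F}u_0)(p\|x\|_p^{-1})$. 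Finally I would observe that this exceptional summand must be switched off exactly when its index $-ord_p(x)-1$ drops below the support bound $-M$, i.e. when $\|x\|_p\leq p^{-M}$; the factor $1-\Omega(p^M\|x\|_p)$ encodes precisely this on/off condition, and checking that the exceptional index lands exactly one step below $K$ (so there is no overlap with the main sum) yields the stated closed form. Convergence of the infinite sum is immediate from the decaying factor $p^{-nj}$ together with the boundedness of the exponential and of $\mathcal{F}u_0$.
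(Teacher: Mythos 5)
Your proposal is correct and follows essentially the same route as the paper: part (ii) is the same positivity-of-$Z_t$ argument, and part (iii) is the paper's computation (slice $B_M^n$ into spheres, apply the standard oscillatory sphere integral, and track the two cutoffs, with your identification of the exceptional term at $j=-\mathrm{ord}_p(x)-1$ and of $K=\max(-M,-\mathrm{ord}_p(x))$ matching the paper's three-case analysis exactly). For part (i) your primary route (evaluating $\mathcal{F}u(\cdot,t)$ at $\xi=0$) is only cosmetically different from the paper's Fubini argument on $Z_t\ast u_0$, which you also state as the alternative.
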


\begin{proof}
\begin{enumerate}[(i)]
\item By using Fubini's Theorem and Theorem \ref{Z(t,x)_properties}-(i),
we have that
\begin{align*}
\int_{\mathbf{\mathbb{Q}}_p^n}u(x,t)d^nx
&=\int_{\mathbf{\mathbb{Q}}_p^n}\int_{\mathbf{\mathbb{Q}}_p^n}Z_t(x-y) u_0(y)d^nyd^nx\\
&=\int_{\mathbf{\mathbb{Q}}_p^n}u_0(y)\int_{\mathbf{\mathbb{Q}}_p^n} Z_t(x-y)d^nxd^ny\\
&=\int_{\mathbf{\mathbb{Q}}_p^n}u_0(y)d^ny.
\end{align*}
\item Following (\ref{classical_solution}), Theorem \ref{Z(t,x)_properties} and using Fubini's Theorem we have that
\begin{align*}
u(x,t)-v(x,t) &=\int_{\mathbf{\mathbb{Q}}_p^n}\chi_p(-x\cdot \xi) e^{-t(1-(\mathcal{F}J)(\|\xi\|_p))}\left\{(\mathcal{F}u_0)(\xi
)-(\mathcal{F}v_0)(\xi)\right\}d^n\xi \\
& = \int_{\mathbf{\mathbb{Q}}_p^n} (u_0(y)-v_0(y)) \left\{\int_{\mathbf{\mathbb{Q}}_p^n}\chi_p (-(x-y)\cdot \xi) e^{-t(1-(\mathcal{F}J)(\|\xi\|_p))}d^n\xi \right\}d^ny\\
&=\int_{\mathbf{\mathbb{Q}}_p^n}\left(u_0(y)-v_0(y)\right)Z_t(x-y)d^ny\geq 0.
\end{align*}
\item By (\ref{classical_solution}) and the fact that $supp(\mathcal{F}u_0)=B_{M}^n$, $M\in \mathbb{Z}$, we have that
\begin{align*}
u(x,t)&=\sum_{j=-\infty}^{M}e^{-t(1-(\mathcal{F}J))(p^j)}(\mathcal{F}u_0)(p^j)\int_{\|p^j\xi\|_p=1}\chi_p\left(
-x\cdot \xi \right)d^n\xi \\
&=\sum_{j=-\infty}^{M}p^{nj}e^{-t(1-(\mathcal{F}J))(p^j)}(\mathcal{F}u_0)(p^j)\int_{\|w\|_p=1}\chi_p\left(
-xp^{-j}\cdot w \right)d^nw \\
&=\sum_{j=-M}^{\infty}p^{-nj}e^{-t(1-(\mathcal{F}J))(p^{-j})}(\mathcal{F}u_0)(p^{-j})\int_{\|w\|_p=1}\chi_p\left(
-xp^j\cdot w \right)d^nw
\end{align*}
Taking into account this expansion of $u(x,t)$ and the formula
\begin{equation*}
\int_{\|w\|_p=1}\chi_p\left(-xp^j\cdot w\right) d^nw =
\begin{cases}
1-p^{-n}, & \text{if}  \text{ }\|x\|_p\leq p^j, \\
-p^{-n}, & \text{if }  \text{\ }\|x\|_p= p^{j+1}, \\
0, & \text{if}  \text{\ }\|x\|_p\geq p^{j+2},
\end{cases}
\end{equation*}
we will consider the following cases:
\begin{case}
$\|x\|_p\leq p^{-M}.$
\end{case}
In this case we have that
\begin{equation*}
u(x,t)=(1-p^{-n})\sum_{j=-M}^{\infty}p^{-nj}e^{-t(1-(\mathcal{F}J)(p^{-j}))}\left(\mathcal{F}u_0\right)(p^{-j}).
\end{equation*}
\begin{case}
$\|x\|_p=p^{-M+1}.$
\end{case}
For this case it is satisfied that
\begin{align*}
u(x,t) & = -p^{-n(1-M)}e^{-t(1-(\mathcal{F}J)(p^{M}))}(\mathcal{F}u_0)(p^{M}) + \\ & (1-p^{-n})\sum_{j=-M+1}^{\infty}p^{-nj}e^{-t(1-(\mathcal{F}J)(p^{-j}))} (\mathcal{F}u_0)(p^{-j}).
\end{align*}

\begin{case}
$\|x\|_p=p^{-L}\geq p^{-M+2}$, where $L=ord_p(x)$.
\end{case}
In this case we have that
\begin{align*}
u(x,t) & = -p^{nL}e^{-t(1-(\mathcal{F}J)(p^{L+1}))}\left(\mathcal{F}u_0\right)(p^{L+1}) +\\ &(1-p^{-n})\sum_{j=-L}^{\infty}p^{-nj}e^{-t(1-(\mathcal{F}J)(p^{-j}))}\left(\mathcal{F}u_0\right)(p^{-j}).
\end{align*}

From all the above, we obtain the desired equality.
\end{enumerate}
\end{proof}

\begin{remark}
The solutions of Cauchy problem (\ref{Cauchy_problem}) depend continuously on the initial data, in the following sense: if $u$ and $v$
are fundamental solutions of (\ref{Cauchy_problem}) with initial data $u_0$ and $v_0$ respectively, then
\begin{equation*}
\|u(x,t)-v(x,t)\|_{L^1(\mathbf{\mathbb{Q}}_p^n)}\leq\|u_0-v_0\|_{L^1(\mathbf{\mathbb{Q}}_p^n)}.
\end{equation*}
\end{remark}

\begin{definition}
\label{convolution_semigroup} A family $(\mu _t)_{t>0}$ of positive bounded
measures on $\mathbb{Q}_p^n$ with the properties
\begin{enumerate}[(i)]
\item $\mu _t(\mathbb{Q}_p^n)\leq 1$ for $t>0,$
\item $\mu _t\ast \mu _{s}=\mu _{t+s}$ for $t,s>0,$
\item $\lim_{t\rightarrow 0^{+}}\mu_t=\delta _0$ vaguely
($\delta _0 $ denotes the Dirac measure at $0\in
\mathbb{Q}_p^n)$,
\end{enumerate}
is called a convolution semigroup on $\mathbb{Q}_p^n.$
\end{definition}

\begin{theorem} \label{conv_semigroups}
The family $(Z(t,\cdot))_{t>0}$ determines a convolution semigroup on $\mathbb{Q}_p^n$.
\end{theorem}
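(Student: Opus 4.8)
The plan is to verify the three defining properties of Definition \ref{convolution_semigroup} for the family of measures $\mu_t(d^nx):=Z_t(x)\,d^nx$, $t>0$. The first two properties are essentially immediate consequences of Theorem \ref{Z(t,x)_properties}, so the only substantial work lies in establishing the vague convergence in property (iii).

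First I would dispose of properties (i) and (ii). By Theorem \ref{Z(t,x)_properties}-(i) we have $Z_t(x)\geq 0$, so each $\mu_t$ is a positive measure; by Theorem \ref{Z(t,x)_properties}-(ii), $\mu_t(\mathbb{Q}_p^n)=\int_{\mathbb{Q}_p^n}Z_t(x)\,d^nx=1$, which shows that $\mu_t$ is bounded and that $\mu_t(\mathbb{Q}_p^n)\leq 1$, giving property (i). The semigroup law (ii) is a direct translation of Theorem \ref{Z(t,x)_properties}-(iii): since $Z_{t+s}=Z_t\ast Z_s$ as densities, the associated measures satisfy $\mu_t\ast\mu_s=\mu_{t+s}$.

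For property (iii) I would show that $\langle Z_t,\varphi\rangle\to\varphi(0)$ as $t\to 0^{+}$ for every $\varphi\in\mathcal{D}(\mathbb{Q}_p^n)$. Using the duality identity $\langle Z_t,\varphi\rangle=\langle\mathcal{F}Z_t,\mathcal{F}^{-1}\varphi\rangle$ together with Lemma \ref{TF_Z(t,x)}, this pairing becomes
\begin{equation*}
\langle Z_t,\varphi\rangle=\int_{\mathbb{Q}_p^n}e^{-t\left(1-(\mathcal{F}J)(\|\xi\|_p)\right)}(\mathcal{F}^{-1}\varphi)(\xi)\,d^n\xi.
\end{equation*}
Since $\mathcal{F}^{-1}\varphi\in\mathcal{D}\subset L^1$ and, by Remark \ref{obs_J}-(i), $0\leq e^{-t(1-(\mathcal{F}J)(\|\xi\|_p))}\leq 1$ with the integrand converging pointwise to $(\mathcal{F}^{-1}\varphi)(\xi)$ as $t\to 0^{+}$, the dominated convergence theorem yields
\begin{equation*}
\lim_{t\to 0^{+}}\langle Z_t,\varphi\rangle=\int_{\mathbb{Q}_p^n}(\mathcal{F}^{-1}\varphi)(\xi)\,d^n\xi=\left(\mathcal{F}(\mathcal{F}^{-1}\varphi)\right)(0)=\varphi(0).
\end{equation*}

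Finally, I would upgrade this to vague convergence against all of $C_c(\mathbb{Q}_p^n)$: since every compactly supported continuous function on the totally disconnected space $\mathbb{Q}_p^n$ is a uniform limit of Bruhat--Schwartz functions and the masses $\mu_t(\mathbb{Q}_p^n)=1$ are uniformly bounded, a standard $\varepsilon/3$ argument transfers the limit from $\mathcal{D}$ to $C_c$, so that $\mu_t\to\delta_0$ vaguely. The main obstacle is precisely this last step together with the Fourier--duality rewriting: one must confirm that the pairing identity $\langle Z_t,\varphi\rangle=\langle\mathcal{F}Z_t,\mathcal{F}^{-1}\varphi\rangle$ is legitimate with $Z_t$ understood as the regular distribution of Lemma \ref{TF_Z(t,x)}, and that testing against the dense subspace $\mathcal{D}$ suffices for vague convergence given the uniform total-mass bound.
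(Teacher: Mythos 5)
Your proof is correct, but it takes a genuinely different route from the paper's for the substantive part. The paper disposes of the whole theorem by appealing to the general theory in Berg--Forst: since $1-(\mathcal{F}J)$ is continuous and negative definite (Lemma \ref{Lemma_J} and Remark \ref{obs_J}), the functions $e^{-t(1-(\mathcal{F}J))}$ are Fourier transforms of positive bounded measures by \S 8 of that reference, and the vague convergence $Z(t,\cdot)\to\delta_0$ follows from the continuity theorem in \S 10 because $\lim_{t\to 0}(\mathcal{F}Z_t)(\xi)=1$ pointwise; the remaining properties are quoted from Theorem \ref{Z(t,x)_properties}. You instead obtain positivity, the mass bound, and the semigroup law directly from Theorem \ref{Z(t,x)_properties}, and you prove the vague convergence by hand: the Parseval identity $\langle Z_t,\varphi\rangle=\langle\mathcal{F}Z_t,\mathcal{F}^{-1}\varphi\rangle$ (legitimate since $\mathcal{F}$ is an isomorphism of $\mathcal{D}$ and $\mathcal{F}Z_t$ is the regular distribution of Lemma \ref{TF_Z(t,x)}), dominated convergence using $0\leq e^{-t(1-(\mathcal{F}J))}\leq 1$ and the compact support of $\mathcal{F}^{-1}\varphi$, the identity $\int\mathcal{F}^{-1}\varphi=\varphi(0)$, and finally the density of $\mathcal{D}$ in $C_c(\mathbb{Q}_p^n)$ together with the uniform mass bound to upgrade to genuine vague convergence. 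Your argument is more elementary and self-contained (it avoids the negative-definiteness machinery entirely and makes explicit the $\mathcal{D}$-to-$C_c$ step, which the paper leaves implicit inside the citation), at the cost of being longer; the paper's argument is shorter and situates the result within the standard correspondence between continuous negative definite functions and convolution semigroups on locally compact abelian groups, which is more readily generalized. Both establish the same three properties of Definition \ref{convolution_semigroup}.
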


\begin{proof}
Since $1-(\mathcal{F}J)$ is a continuous and negative definite function, see Lemma \ref{Lemma_J} and Remark \ref{obs_J}, we have
by \cite[Section \S 8]{Berg-Gunnar} that the family $\left(Z(t,\cdot)\right)_{t>0}$ corresponds to positive and bounded measures. Moreover,
by Lemma \ref{TF_Z(t,x)} is obtained that $\lim_{t\rightarrow 0} (\mathcal{F}Z_t(x))(\xi )=1$, so that by \cite[Section \S 10]{Berg-Gunnar} it is true that
$\lim_{t\rightarrow 0^{+}}Z(t,\cdot)=\delta _0$ vaguely.
From the above and the Theorem \ref{Z(t,x)_properties} we have
that the family $\left(Z(t,\cdot)\right)_{t>0}$, determines a convolution semigroup on $\mathbb{Q}_p^n.$
\end{proof}

\section{\label{Markov Processes}Strong Markov Processes, Linear and Logarithmic Energy Landscape, the First Passage Time Problem and the Probability of Survival}

In this section, we will study the first passage time problem associated with strong Markov processes (L\'evy processes) with transition functions of the types linear and logarithmic. Later, we will study the probability of survival for this transition functions. For the basic results on L\'evy
and strong Markov processes the reader may consult \cite{Blumenthal-Getoor}, \cite{Evans}.

\begin{definition} \label{Landscapes type definition}
\begin{enumerate}[(i)]
\item We say that function $J:$\textbf{\ }$\mathbb{Q}_p^n\rightarrow \mathbb{R}_{+}$ is of linear type if it satisfies the
Hypothesis A and there exist positive real constants $B$, $C$, $\alpha ,$
with $B\leq C,$ such that
\begin{equation*}
\frac{B}{\|x\|_p^{\alpha +1}}\leq J(\|x\|_p)\leq \frac{C}{
\|x\|_p^{\alpha +1}}\text{, for any }x\in\mathbb{Q}_p^n.
\end{equation*}

\item We say that function $J:$\textbf{\ }$\mathbb{Q}_p^n\rightarrow \mathbb{R}_{+}$ is of logarithmic type if it satisfies
the Hypothesis A and there exist positive real constants $D$, $E$, $\alpha$, $\beta$, with $D\leq E$ such that
\begin{equation*}
\frac{D\ln ^{\alpha }(1+\|x\|_p)}{\|x\|_p^{\beta }}\leq J(\|x\|_p)\leq
\frac{E\ln ^{\alpha }(1+\|x\|_p)}{\|x\|_p^{\beta }},
\end{equation*}
for any $x\in\mathbb{Q}_p^n.$
\end{enumerate}
\end{definition}

\begin{remark} \label{Remark J}
\begin{enumerate}[(i)]
\item For instance, in the one dimensional case the function $J(|x|_p)=\frac{1}{|x|_p\ln ^{\alpha
}(1+|x|_p)},$ $\alpha >1,$ which was used in \cite{Av-4} is not
integrable, indeed,
\[\int_{\mathbb{Q}_p} \frac{1}{|x|_p\ln^\alpha (1+|x|_p)} dx = (1-p^{-1}) \sum_{j=0}^{\infty }\frac{1}{\ln ^{\alpha }(1+p^{-j})}
+(1-p^{-1})\sum_{j=1}^{\infty }\frac{1}{\ln ^{\alpha }(1+p^j)}.\]
Note that
\begin{align*}
\ln ^{\alpha }(1+p^{-j}) &\leq (1+p^{-j})^{\alpha } \\
&\Leftrightarrow \frac{(1-p^{-1})}{(1+p^{-j})^{\alpha }}\leq \frac{(1-p^{-1})}{\ln ^{\alpha }(1+p^{-j})},
\end{align*}
moreover,
\begin{equation*}
(1-p^{-1})\lim_{j\rightarrow \infty }\frac{1}{(1+p^{-j})^{\alpha }}
=(1-p^{-1})\neq 0,
\end{equation*}
so that by the series divergence criterion $\sum_{j=0}^{\infty }
\frac{(1-p^{-1})}{(1+p^{-j})^{\alpha }}\rightarrow \infty $ and consequently
by the series comparison criterion, $(1-p^{-1})\sum_{j=0}^{\infty }
\frac{1}{\ln ^{\alpha }(1+p^{-j})}\rightarrow \infty .$ In general, we have
that the function $J(\|x\|_p)=\frac{B}{\|x\|_p\ln ^{\alpha }(1+\|x\|_p)
},$ $\alpha >1,$ $B>0,$ is not integrable in $\mathbb{Q}_p^n.$ This situation causes serious mathematical problems to achieve
the goals set in this paper. Hence the forms of the function $J$ of the logarithmic type\ in Definition \ref{Landscapes type definition}-(ii).

\item By Lemma \ref{lemma1}$-(ii)$ we have that
\begin{align*}
1-(\mathcal{F}J)(1)&=p^nJ(p)+(1-p^{-n})\sum_{j=2}^{\infty
}p^{nj}J(p^j)\\
&=p^nJ(p)+\int_{\mathbf{\mathbb{Q}}_p^n\setminus B_1^n}J(\|x\|_p)d^nx.
\end{align*}
If $1-(\mathcal{F}J)(1)=0$, then $J(p)=0$ and $J(\|x\|_p)=0$ for all $x\in\mathbb{Q}_p^n\backslash B_1^n$, i.e. $supp(J)\subseteq B_1^n$, which
is not possible since $J$ is of linear and logarithmic type. Therefore, $1-(\mathcal{F}J)(1)>0$.
\end{enumerate}
\end{remark}

\begin{example}
\begin{enumerate}[(i)]
\item Let $\alpha >0$ for which there exists $r\in\mathbb{Q}\backslash \{0\}$ such that $\alpha +1=n+r,$ and for this $r$ there exists $
N:=N(r)\in\mathbb{N}$ with the property that if\ $j\in \mathbb{N}$ and $j\geq N$ then $j+\frac{n}{r},$ $j-\frac{n}{r}\in \mathbb{N}$. Then
the function $J(\|x\|_p):=\frac{B}{\|x\|^{\alpha +1}},$ $B>0$, is
of the linear type where $B$ is a constant so that $\int_{\mathbb{Q}_p^n}J(\|x\|_p)d^nx=1$. For example, the function $J(\|x\|_p)=
\frac{B}{\|x\|^{\alpha +1}}$ with $\alpha +1=n+r,$ where $n=3$ and $r=\frac{1
}{2}$ is of the linear type.

\item The functions $J(\|x\|_p):=\frac{B\ln ^{\alpha }(1+\|x\|_p)}{
\|x\|^{\alpha }}$ for any $\alpha >0$ are of logarithmic type, here $B$ is a
positive constant so that $\int_{\mathbb{Q}_p^n}J(\|x\|_p)d^nx=1$.
\end{enumerate}
\end{example}

The next lemma offers us conditions for determine when the function $\frac{1}{1- (\mathcal{F}J)}$ is not integrable when we restrict it to the unit ball.

\begin{lemma} \label{1-TFJ int_no}
Suppose that the function $J:$\textbf{\ }$\mathbb{Q}_p^n\rightarrow \mathbb{R}_{+}$ satisfies one of the following
conditions:
\begin{enumerate}[(i)]
\item $J$ is of linear type with $\alpha +1-2n\geq 0$,
\item $J$ is of logarithmic type with $\beta -\alpha -2n\geq 0$,
\end{enumerate}
Then
\begin{equation*}
\frac{\Omega \left( \|\xi \|_p\right) }{1- (\mathcal{F}J)(\|\xi \|_p)}\notin
L^1\left(\mathbb{Q}_p^n,d^n\xi \right).
\end{equation*}
\end{lemma}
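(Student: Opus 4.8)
The plan is to exploit the radial structure to turn the integral over the unit ball into a series, bound $1-(\mathcal{F}J)$ from \emph{above} by a tail sum supplied by Lemma \ref{Lemma_J}-(ii), so that the reciprocal is bounded below, and then conclude divergence by the term test.

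First I would reduce to a series. Since the integrand is radial and $\Omega(\|\xi\|_p)$ is the characteristic function of $\mathbb{Z}_p^n$, decomposing the ball into the spheres $\|\xi\|_p=p^{-m}$, $m\geq 0$, each of Haar measure $(1-p^{-n})p^{-nm}$, gives
\[
\int_{\mathbb{Q}_p^n}\frac{\Omega(\|\xi\|_p)}{1-(\mathcal{F}J)(\|\xi\|_p)}\,d^n\xi=(1-p^{-n})\sum_{m=0}^{\infty}\frac{p^{-nm}}{1-(\mathcal{F}J)(p^{-m})}.
\]
Next I would evaluate the denominator using Lemma \ref{Lemma_J}-(ii). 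Writing $a_k:=p^{nk}J(p^k)$ for $k\geq 1$, substituting $\|\xi\|_p=p^{-m}$ and reindexing $k=m+j$ in the series yields
\[
1-(\mathcal{F}J)(p^{-m})=a_{m+1}+(1-p^{-n})\sum_{k=m+2}^{\infty}a_k=p^{-n}a_{m+1}+(1-p^{-n})T_m,\qquad T_m:=\sum_{k=m+1}^{\infty}a_k.
\]
Because every $a_k\geq 0$ we have $a_{m+1}\leq T_m$, hence the crucial upper bound $1-(\mathcal{F}J)(p^{-m})\leq T_m$, i.e. $\frac{1}{1-(\mathcal{F}J)(p^{-m})}\geq \frac{1}{T_m}$.

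Then I would estimate $T_m$ from Definition \ref{Landscapes type definition}. In the linear case, $J(p^k)\leq C\,p^{-k(\alpha+1)}$ gives $T_m\leq C\sum_{k\geq m+1}p^{k(n-\alpha-1)}$; since $\alpha+1\geq 2n>n$ this geometric series converges and $T_m\leq C'\,p^{m(n-\alpha-1)}$. Plugging back, the general term of the series is bounded below by a constant multiple of $p^{m(\alpha+1-2n)}$, which is $\geq 1$ under the hypothesis $\alpha+1-2n\geq 0$. In the logarithmic case, using $\ln(1+p^k)\leq \ln 2+k\ln p$ I would bound $\ln^{\alpha}(1+p^k)\leq C_1(1+k)^{\alpha}$ and then $T_m\leq C''(m+2)^{\alpha}p^{m(n-\beta)}$ (the polynomial factor being produced by the tail of a polynomial-times-geometric series, valid since $\beta\geq\alpha+2n>n$). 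The general term is then bounded below by a constant multiple of
\[
\frac{p^{m(\beta-2n)}}{(m+2)^{\alpha}}\geq \frac{p^{m\alpha}}{(m+2)^{\alpha}}=\Bigl(\frac{p^{m}}{m+2}\Bigr)^{\alpha}
\]
under $\beta-\alpha-2n\geq 0$. In both cases the general term of the defining series fails to tend to $0$, so the series diverges and the claimed non-integrability follows.

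The main obstacle is the passage through Lemma \ref{Lemma_J}-(ii): one must reindex the sum carefully to obtain the clean tail-sum identity and extract the one-sided estimate $1-(\mathcal{F}J)(p^{-m})\leq T_m$ (the \emph{upper} bound on $1-\mathcal{F}J$ is what forces the series to be large). The secondary technical point is the logarithmic case, where the factor $\ln^{\alpha}(1+p^k)$ introduces polynomial growth $(m+2)^{\alpha}$ into the tail estimate of $T_m$; one must verify that this polynomial factor is dominated by the geometric factor $p^{m\alpha}$ so that the general term still blows up rather than being damped.
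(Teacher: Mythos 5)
Your proof is correct and follows essentially the same route as the paper: both arguments combine Lemma \ref{Lemma_J}-(ii) with the upper bound on $J$ from Definition \ref{Landscapes type definition} to bound $1-(\mathcal{F}J)(p^{-m})$ from above by a geometrically decaying quantity, and then conclude that the general term of the resulting series over the spheres $\|\xi\|_p=p^{-m}$ does not tend to zero. The only cosmetic difference is in the logarithmic case, where the paper applies $\ln(1+u)\leq u$ to obtain the clean power bound $E_2\|\xi\|_p^{\beta-n-\alpha}$, while you retain a polynomial factor $(m+2)^{\alpha}$ and absorb it into $p^{m\alpha}$; both versions yield the divergence.
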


\begin{proof}
\begin{enumerate}[(i)]
\item By Lemma \ref{lemma1}$-(ii)$ and Definition \ref{Landscapes type definition}-(i), we have for $\xi \in \mathbb{Q}_p^n\backslash \{0\}$,
\begin{align*}
1-(\mathcal{F}J)(\|\xi \|_p)&\leq \|\xi\|_p^{-n}\left\{\frac{Cp^n}{p^{\alpha+1}\|\xi\|_p^{-(\alpha+1)}}+(1-p^{-n})\sum_{j=2}^{\infty}\frac{Cp^{nj}}{p^{j(\alpha+1)}\|\xi\|_p^{-(\alpha+1)}}\right\}\\
&=C\|\xi\|_p^{\alpha+1-n}\left\{p^{n-\alpha-1}+(1-p^{-n})\sum_{j=2}^{\infty}p^{j(n-\alpha-1)}\right\}.
\end{align*}
Note that the condition\ $\alpha +1-2n\geq 0$ implies that $n-\alpha-1<0$, so
the series $\sum_{j=2}^{\infty }p^{j(n-\alpha-1)}<\infty .$ Then,
there exists a positive constant $C_2$ such that
\begin{equation}
1-(\mathcal{F}J)(\|\xi \|_p)\leq C_2 \|\xi \|_p^{\alpha+1-n},\text{ }\xi
\in \mathbb{Q}_p^n\backslash \{0\}.  \label{1-TF(J)<=  linear}
\end{equation}
Therefore,
\begin{align*}
\int_{\mathbb{Z}_p^n}\frac{d^n\xi }{1-(\mathcal{F}J) \left( \|\xi
\|_p\right) }
&\geq \frac{1}{C_{2}}\int_{\mathbb{Z}_p^n}
\frac{d^n\xi }{\|\xi \|_p^{\alpha +1-n}}\\
&= \frac{1}{C_{2}}
\sum\limits_{j=0}^{\infty }\frac{1}{p^{-j(\alpha +1-n)}}\int\limits_{\|\xi
\|_p=p^{-j}}d^n\xi \\
&=\frac{(1-p^{-n})}{C_{2}}\sum\limits_{j=0}^{\infty }p^{j(\alpha
+1-2n)}\rightarrow \infty .
\end{align*}

\item By Lemma \ref{lemma1}$-(ii)$, Definition \ref{Landscapes type definition}-(ii) and applying the inequality $\ln^{\alpha}
(1+p^j\|\xi\|_p^{-1})\leq p^{j\alpha}\|\xi\|_p^{-\alpha}$, $j\geq 1$, we have that $1- (\mathcal{F}J)(\|\xi \|_p),$ $\xi \in  \mathbb{Q}_p^n\backslash \{0\}$,
is dominated superiorly by
\begin{align*}
&\|\xi\|_p^{-n}\left\{\frac{Ep^nln^{\alpha}(1+p\|\xi\|_p^{-1})}{p^{\beta}\|\xi\|_p^{-\beta}}+(1-p^{-n})\sum_{j=2}^{\infty}\frac{Eln^{\alpha}(1+p^j\|\xi\|_p^{-1})p^{nj}}{p^{j\beta}\|\xi\|_p^{-\beta}}\right\}\\
&\leq E\|\xi\|_p^{\beta-n-\alpha} \left\{p^{n+\alpha-\beta} + (1-p^{-n}) \sum_{j=2}^{\infty}p^{j(\alpha+n-\beta)}\right\}.
\end{align*}

Note that the condition\ $\beta -\alpha -2n\geq 0$ implies that $\alpha+n-\beta
<0$, so the series $\sum_{j=2}^{\infty }p^{j(\alpha+n-\beta)}<\infty .$ Therefore, there exists a positive constant $E_{2}$ such that
\begin{equation}
1-(\mathcal{F}J)(\|\xi \|_p)\leq E_{2}\|\xi \|_p^{\beta -n-\alpha },\text{ }
\xi \in \mathbb{Q}_p^n\backslash \{0\}.  \label{1-TF(J)<=  logaritmic}
\end{equation}
Consequently,
\begin{align*}
\int_{\mathbb{Z}_p^n}\frac{d^n\xi }{1- (\mathcal{F}J)\left( \|\xi
\|_p\right) } &\geq \frac{1}{E_{2}}\int_{\mathbb{Z}_p^n}
\frac{d^n\xi }{\|\xi \|_p^{\beta -n-\alpha }}\\
&=\frac{1}{E_{2}}
\sum\limits_{j=0}^{\infty }\frac{1}{p^{-j(\beta -n-\alpha )}}
\int\limits_{\|\xi \|_p=p^{-j}}d^n\xi \\
&=\frac{(1-p^{-n})}{E_{2}}\sum\limits_{j=0}^{\infty }p^{j(\beta -\alpha
-2n)}\rightarrow \infty .
\end{align*}
\end{enumerate}
\end{proof}

\begin{remark} \label{remarks on dif type}
By Lemma \ref{Lemma_J}-$(i)$, we have that $1- (\mathcal{F}J)(0)=0$. On the other hand, by Lemma \ref{Lemma_J}-$(ii)$ and
Definition \ref{Landscapes type definition} the following considerations are obtained for $\xi \in \mathbb{Q}_p^n \backslash \{0\}$:
\begin{enumerate}[(i)]
\item If $J$ is of linear type with $J(\|x\|_p) \approx \frac{C}{
\|x\|_p^{\alpha +1}}$, then there is a constant $F>0$ such that $$1-(\mathcal{F}J)(\|\xi\|_p)=F\|\xi\|_p^{\alpha+1-n}.$$
In this case we have that $1-(\mathcal{F}J)(\|\xi\|_p)$ is a increasing function with
respect to $\|\cdot\|_p$.
\item If $J$ is of logarithmic type with $J(\|x\|_p) \approx \frac{E\ln ^{\alpha }(1+\|x\|_p)}{\|x\|_p^{\beta }}$,
then $1-(\mathcal{F}J)(\|\xi\|_p)$ it's exactly
$$E\|\xi\|_p^{\beta-n}\left\{p^{n-\beta}ln^{\alpha}(1+p\|\xi\|_p^{-1})+(1-p^{-n})\sum_{j=2}^{\infty}p^{j(n-\beta)}ln^{\alpha}(1+p^j\|\xi\|_p^{-1})\right\}.$$
In this case we have that the functions $ln^{\alpha}(1+p\|\xi\|_p^{-1})$ and $ln^{\alpha}(1+p^j\|\xi\|_p^{-1})$, $j\geq 2$, are
decreasing with respect to $\|\cdot\|_p$, while the function $E\|\xi\|_p^{\beta-n}$ is a increasing function with
respect to $\|\cdot\|_p$. So suitably taking $\alpha$, we have that $1-(\mathcal{F}J)(\|\xi\|_p)$ is a increasing function with
respect to $\|\cdot\|_p$.
\end{enumerate}
\end{remark}

We now consider the Cauchy problem
\begin{equation}
\left\{
\begin{array}{ll}
\frac{\partial u}{\partial t}(x,t)=\mathcal{A}u(x,t), & t\in \left[
0,\infty \right) ,\text{\ }x\in \mathbb{Q}_p^n \\
&  \\
u(x,0)=\Omega (\|x\|_p). &
\end{array}
\right.  \label{Cauchy_problem_2}
\end{equation}
by Remark \ref{Obs_operator} it is true that \begin{equation}
u(x,t)=Z_t(x)\ast \Omega (\|x\|_p) = \int_{\mathbf{\mathbb{Q}}_p^n}\chi _p\left( -\xi \cdot x\right) e^{-t(1- (\mathcal{F}J)(\|\xi \|_p))}  \Omega (\|x\|_p)d^n\xi ,  \label{u(x,t)}
\end{equation}%
is a classical solution of the Cauchy problem (\ref{Cauchy_problem_2}). Moreover, proceeding in the same way as in \cite{To-Z} we
have that there exists a strong Markov processes (more exactly a L\'evy process)
$\mathfrak{J}(t,\omega )$ with state space
$(\mathbb{Q}_p^n,\mathcal{B}\left(\mathbb{Q}_p^n\right))$
and transition function $q_t(x,\cdot )$ given by
\begin{equation*}
q_t(x,E)=\left\{
\begin{array}{ll}
Z_t(x)\ast 1_{E}(x)\text{,} & \text{\ for }t>0\text{, }x\in
\mathbf{\mathbb{Q}}_p^n,\text{ }E\in \mathcal{B}\left(\mathbb{Q}_p^n\right) \text{ } \\
&  \\
1_{E}(x), & \text{for }t=0\text{, }x\in
\mathbf{\mathbb{Q}}_p^n,\text{ }E\in
\mathcal{B}\left(\mathbb{Q}_p^n\right).
\end{array}
\right.
\end{equation*}
Set $\Upsilon $ to be the space of all paths $\mathfrak{J}(t,\omega )$. It is satisfied that all trajectories of the
process $\mathfrak{J}(t,\omega )$ start in $\mathbb{Z}_p^n$.

\begin{definition}
The random variable $\tau _{\mathbb{Z}_p^n}(\omega ):$ $\Upsilon \longrightarrow \mathbb{R}_{+}\cup \{+\infty \}$ defined by the relation
\begin{equation*}
\inf \{t>0:\mathfrak{J(}t,\omega \mathfrak{)}\in\mathbb{Z}_p^n\text{ and there exists }t^{\prime }\text{ such that }0<t^{\prime }<t
\text{ and }\mathfrak{J(}t^{\prime },\omega \mathfrak{)\notin }\mathbb{Z}_p^n\}
\end{equation*}%
is called the first passage time of a trajectory of the random process $
\mathfrak{J(}t,\omega \mathfrak{)}$ entering the domain $\mathbb{Z}_p^n$ (i.e., the first instant when it returns to $\mathbb{Z}_p^n$ ).
\end{definition}

\begin{definition}
We say that $\mathfrak{J(}t,\omega \mathfrak{)}$ is recurrent with respect
to $\mathbb{Z}_p^n$ if
\begin{equation}
P(\{\omega \in \Upsilon :\tau _{\mathbb{Z}_p^n}(\omega )\mathfrak{<}\infty \})=1,  \label{Probability}
\end{equation}
i.e. every path of $\mathfrak{J(}t,\omega \mathfrak{)}$ is sure to return to
$\mathbb{Z}_p^n$. If (\ref{Probability}) does not hold then we say that $\mathfrak{J(}t,\omega \mathfrak{)}$ is transient with
respect to $\mathbb{Z}_p^n,$ in this case, there exist paths of $\mathfrak{J(}t,\omega
\mathfrak{)}$ that abandon $\mathbb{Z}_p^n$ and never go back.
\end{definition}

\begin{lemma} \cite[Lemma 8]{To-Z} \label{relation g and f} The probability density function $f(t)$ of the
random variable $\tau _{\mathbb{Z}_p^n}(\omega )$ satisfies the non-homogeneous Volterra equation of
second kind
\begin{equation}
g(t)=\int\limits_0^{\infty }g(t-\tau )f(\tau )d\tau +f(t)
\label{Volterra_equ}
\end{equation}
where
\begin{equation*}
g(t)=\int_{\mathbf{\mathbb{Q}}_p^n\backslash
\mathbb{Z}_p^n}J(\|y\|_p)u(y,t)d^ny,  \label{g(t)}
\end{equation*}
is the probability density function for a path of $\mathfrak{J(}t,\omega
\mathfrak{)}$ to enter into $\mathbb{Z}_p^n$ at the instant of time $t$, with the condition that $\mathfrak{J(}
0,\omega \mathfrak{)\in }\mathbb{Z}_p^n.$
\end{lemma}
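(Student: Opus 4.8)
The plan is to read (\ref{Volterra_equ}) as a renewal (first-entrance) identity for the pure-jump Lévy process $\mathfrak{J}(t,\omega)$, exactly as one does for first passage times of Markov processes. Recall that $u(y,t)$ solves (\ref{Cauchy_problem_2}) with initial datum $\Omega(\|y\|_p)=1_{\mathbb{Z}_p^n}(y)$, so $u(y,t)=(Z_t\ast 1_{\mathbb{Z}_p^n})(y)$ is the occupation (probability) density at time $t$ of a trajectory whose law at time $0$ is Haar measure on $\mathbb{Z}_p^n$. First I would identify the integrand of $g(t)$ probabilistically. For $y\in\mathbb{Q}_p^n\setminus\mathbb{Z}_p^n$ and $x\in\mathbb{Z}_p^n$ one has $\|y\|_p>1\geq\|x\|_p$, so the ultrametric inequality forces $\|x-y\|_p=\|y\|_p$; since $J$ is radial this yields $J(x-y)=J(\|y\|_p)$ for every $x\in\mathbb{Z}_p^n$. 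Hence the instantaneous rate at which a particle sitting at $y\notin\mathbb{Z}_p^n$ jumps into $\mathbb{Z}_p^n$ equals $\int_{\mathbb{Z}_p^n}J(x-y)\,d^nx=J(\|y\|_p)$, because $\mathbb{Z}_p^n$ has Haar measure $1$. Integrating this rate against $u(y,t)$ over $y\notin\mathbb{Z}_p^n$ shows that $g(t)$ is precisely the density in $t$ of the event that a path starting in $\mathbb{Z}_p^n$ performs \emph{some} entry into $\mathbb{Z}_p^n$ at time $t$, where first and all subsequent entries are counted.

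The core of the argument is then a decomposition of an arbitrary entry according to the first-entrance time. Because balls in $\mathbb{Q}_p^n$ are simultaneously open and closed and $\mathfrak{J}(t,\omega)$ is a pure-jump process, its trajectories are right-continuous step functions and every entry into $\mathbb{Z}_p^n$ occurs at a jump time; thus the notion of ``first entry'' is unambiguous and is measured by the density $f(t)$ of $\tau_{\mathbb{Z}_p^n}$. Next I would classify an entry at time $t$ by the first-entry time $\tau\in(0,t]$. If $\tau=t$ the entry is itself the first and contributes $f(t)$. If $\tau<t$, then at time $\tau$ the path lies in $\mathbb{Z}_p^n$ for the first time; invoking the strong Markov property at the stopping time $\tau_{\mathbb{Z}_p^n}$, together with the translation invariance of the transition function $q_t(x,\cdot)$ and the fact that $\mathbb{Z}_p^n$ is an additive subgroup, the post-$\tau$ trajectory is a fresh copy of the process started in $\mathbb{Z}_p^n$, so the density of a later entry at time $t$ equals $g(t-\tau)$. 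Summing the two alternatives and integrating over the first-entry time gives $g(t)=f(t)+\int_0^t f(\tau)\,g(t-\tau)\,d\tau$; since $g$ vanishes on $(-\infty,0)$ the upper limit may be replaced by $\infty$, which is (\ref{Volterra_equ}).

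The hard part will be to make the restart step rigorous: one must justify applying the strong Markov property at the first-passage stopping time $\tau_{\mathbb{Z}_p^n}$, verify that the post-jump configuration is genuinely distributed as a fresh start in $\mathbb{Z}_p^n$ (this is exactly where translation invariance and the subgroup structure of $\mathbb{Z}_p^n$ are used), and convert the heuristic ``density of entries'' into an honest identity relating the occupation-type quantity $g$ to the first-passage density $f$. A secondary, routine point is the Fubini-type interchange implicit both in writing the total entry rate as $\int_{\mathbb{Q}_p^n\setminus\mathbb{Z}_p^n}J(\|y\|_p)u(y,t)\,d^ny$ and in forming the convolution; this is legitimate since $u(\cdot,t)\geq 0$ by Theorem \ref{Z(t,x)_properties}-(i) and the relevant integrals converge.
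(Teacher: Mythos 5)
The paper does not prove this lemma; it imports it verbatim as \cite[Lemma 8]{To-Z}, and the proof there is exactly the renewal argument you outline: identify $g$ as the all-entries rate via the ultrametric identity $\|x-y\|_p=\|y\|_p$ for $x\in\mathbb{Z}_p^n$, $y\notin\mathbb{Z}_p^n$ (which also makes the entry rate independent of the position within $\mathbb{Z}_p^n$, so the equation closes after the restart), then condition on the first-entrance time using the strong Markov property. Your proposal is correct and takes essentially the same route as the cited source.
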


\begin{remark} \label{Laplace}
Applying Laplace transform at (\ref{Volterra_equ}) we have that
\begin{equation*}
F(s)=1-\frac{1}{1+G(s)},
\end{equation*}
where $F(s)$ and $G(s)$ are the Laplace transforms of $f$ and $g$, respectively.

If $f(t)$ is the probability density function of the
random variable $\tau _{\mathbb{Z}_p^n}(\omega )$, then $f(t)$ is the probability that a trajectory will enter
or return for the first time to the unit ball. So $F(0)=\int_0^{\infty}f(t)dt$ is the probability that the trajectory will ever return to the unit ball.
$F(0)$ can be calculated, because it is to know who is $G(0)$. In this case, $F(0)=1-\frac{G(0)}{1+G(0)}$.

Then, if $G(0)$ exists, it happens that $F(0)<1$ and this fact indicates that there are trajectories that do not necessarily return
to $\mathbb{Z}_p^n$. Conversely, if $G(0)=\infty$ then $F(0)=1$, and this means that with probability $1$ the trajectory returns to $\mathbb{Z}_p^n$.
\end{remark}

By Lemma \ref{1-TFJ int_no}, Remark \ref{Laplace} and proceeding analogously as in \cite[Theorem 3]{To-Z}, we obtain the following theorem.

\begin{theorem}
\label{Theorem3} If $J$ satisfies one of the hypothesis of Lemma
\ref{1-TFJ int_no}, then $\mathfrak{J}(t,\omega )$ is recurrent with
respect to $\mathbb{Z}_p^n$.
\end{theorem}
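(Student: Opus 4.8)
The plan is to reduce recurrence to the divergence statement of Lemma \ref{1-TFJ int_no} by means of the Laplace-transform identity in Remark \ref{Laplace}. Since that remark gives $F(0)=1-\tfrac{1}{1+G(0)}$, and since $u(y,t)\geq 0$ (Theorem \ref{Z(t,x)_properties} together with $u=Z_t\ast\Omega$) and $J\geq 0$ force $g(t)\geq 0$, the quantity $G(0)=\int_0^\infty g(t)\,dt$ is a well-defined element of $[0,+\infty]$, and it suffices to prove $G(0)=+\infty$. Everything then reduces to an explicit computation of $G(0)$ and an application of Lemma \ref{1-TFJ int_no}.

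First I would insert the explicit solution of \eqref{Cauchy_problem_2}. With $u_0=\Omega(\|\cdot\|_p)$ one has $\mathcal{F}u_0=\Omega(\|\cdot\|_p)=1_{\mathbb{Z}_p^n}$, so by the representation in Remark \ref{Obs_operator},
\begin{equation*}
u(y,t)=\int_{\mathbb{Z}_p^n}\chi_p(-\xi\cdot y)\,e^{-t(1-(\mathcal{F}J)(\|\xi\|_p))}\,d^n\xi .
\end{equation*}
Substituting this into the formula for $g(t)$ from Lemma \ref{relation g and f} and interchanging the order of integration (legitimate for fixed $t>0$, since the integrand is dominated by $J(\|y\|_p)$ and $\int_{\mathbb{Z}_p^n}\int_{\mathbb{Q}_p^n\setminus\mathbb{Z}_p^n}J(\|y\|_p)\,d^ny\,d^n\xi<\infty$) gives
\begin{equation*}
g(t)=\int_{\mathbb{Z}_p^n}e^{-t(1-(\mathcal{F}J)(\|\xi\|_p))}\,h(\xi)\,d^n\xi,\qquad h(\xi):=\int_{\mathbb{Q}_p^n\setminus\mathbb{Z}_p^n}J(\|y\|_p)\,\chi_p(-\xi\cdot y)\,d^ny .
\end{equation*}
For $\xi\in\mathbb{Z}_p^n$ and $y\in\mathbb{Z}_p^n$ one has $\xi\cdot y\in\mathbb{Z}_p$, hence $\chi_p(-\xi\cdot y)=1$; splitting the integral defining $(\mathcal{F}J)$ and using that $\mathcal{F}J$ is radial (Lemma \ref{Lemma_J}(i)) then yields $h(\xi)=(\mathcal{F}J)(\|\xi\|_p)-c$ on $\mathbb{Z}_p^n$, where $c:=\int_{\mathbb{Z}_p^n}J(\|y\|_p)\,d^ny$, and $1-c=\int_{\mathbb{Q}_p^n\setminus\mathbb{Z}_p^n}J>0$ because $\operatorname{supp}J\not\subseteq\mathbb{Z}_p^n$ (cf.\ Remark \ref{Remark J}).

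Writing $h(\xi)=(1-c)-\bigl(1-(\mathcal{F}J)(\|\xi\|_p)\bigr)$ splits $g$ into two pieces with nonnegative integrands,
\begin{equation*}
g(t)=(1-c)\!\int_{\mathbb{Z}_p^n}\! e^{-t(1-(\mathcal{F}J)(\|\xi\|_p))}d^n\xi-\int_{\mathbb{Z}_p^n}\!\bigl(1-(\mathcal{F}J)(\|\xi\|_p)\bigr)e^{-t(1-(\mathcal{F}J)(\|\xi\|_p))}d^n\xi=:A(t)-B(t).
\end{equation*}
Integrating each piece in $t$ by Tonelli's theorem, and using $\int_0^\infty e^{-ta}dt=a^{-1}$ and $\int_0^\infty a\,e^{-ta}dt=1$ for $a>0$, one gets $\int_0^\infty B(t)\,dt=\int_{\mathbb{Z}_p^n}d^n\xi=1<\infty$, while
\begin{equation*}
\int_0^\infty A(t)\,dt=(1-c)\int_{\mathbb{Z}_p^n}\frac{d^n\xi}{1-(\mathcal{F}J)(\|\xi\|_p)}=+\infty
\end{equation*}
under either hypothesis of Lemma \ref{1-TFJ int_no}. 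Hence $G(0)=\int_0^\infty(A-B)=+\infty$, and Remark \ref{Laplace} gives $F(0)=1$, i.e.\ $\mathfrak{J}(t,\omega)$ is recurrent with respect to $\mathbb{Z}_p^n$.

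The step I expect to be the main obstacle is the justification of the interchanges of integration, since the kernel $e^{-t(1-(\mathcal{F}J)(\|\xi\|_p))}$ does not decay at $\xi=0$ and $1/(1-(\mathcal{F}J))$ has a non-integrable singularity there, so a naive Fubini/Tonelli on the expression with $h(\xi)$ is blocked by the change of sign of $h$. The resolution I would adopt is precisely the split $g=A-B$ above: each of $A$ and $B$ carries a manifestly nonnegative integrand, so Tonelli applies to each separately, and since $\int_0^\infty B<\infty$ one may subtract without encountering an $\infty-\infty$ indeterminacy; the divergence of $\int_0^\infty A$ is then supplied verbatim by the pointwise upper bounds on $1-(\mathcal{F}J)$ established in the proof of Lemma \ref{1-TFJ int_no}.
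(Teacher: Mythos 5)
Your proposal is correct and follows essentially the same route as the paper, whose proof consists of invoking Lemma \ref{1-TFJ int_no}, Remark \ref{Laplace}, and the analogous computation in the cited reference \cite{To-Z}. Your explicit evaluation of $G(0)$ via the decomposition $g=A-B$ (with Tonelli applied to each nonnegative piece and the finiteness of $\int_0^\infty B$ licensing the subtraction) is precisely the detail the paper delegates to that reference, and it correctly reduces recurrence to the divergence statement of Lemma \ref{1-TFJ int_no}.
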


\begin{remark}
The survival probability $S(t)$ (the probability that a path of $\mathfrak{J(
}t,\omega \mathfrak{)}$ remains in $\mathbb{Z}_p^n$ at the time $t$) is given by
\begin{equation*}
S(t):=S_{\mathbb{Z}_p^n}(t)=\int_{\mathbf{\mathbb{Z}}_p^n}u(x,t)d^nx,  \label{S(t)}
\end{equation*}
where $u(x,t)$ is given by (\ref{u(x,t)}). Therefore,
\begin{eqnarray}
S(t) &=&\int_{\mathbf{\mathbb{Z}}_p^n}\int_{\mathbf{\mathbb{Z}}_p^n}\chi _p\left( -\xi \cdot x\right) e^{-t(1-(\mathcal{F}J)(\|\xi\|_p))}d^n\xi  \notag \\
&=&\int_{\mathbf{\mathbb{Z}}_p^n}e^{-t(1-(\mathcal{F}J)(\|\xi \|_p))}d^n\xi
=(1-p^{-n})\sum_{j=0}^{\infty}p^{-nj}e^{-t(1-(\mathcal{F}J)(p^{-j}))}.
\label{S(t) expansion}
\end{eqnarray}
\end{remark}

In a part of the proof of the following theorem we will use the technique of \cite[Appendix A]{Av-2}. In addition, the
inequalities obtained here have different form to those obtained in \cite{Av-2}.

\begin{theorem} \label{Theorem last}
\begin{enumerate}[(i)]
\item If $J$ is of linear type and satisfies the
hypothesis of Lemma \ref{1-TFJ int_no}$-(i)$ then there exist positive
real constants $B_{3}=\frac{p^n-1}{(\alpha +1-n)\ln p}$, $A_{3}=\frac{p^n-1}{(\alpha+1-n)\ln p}$, such that
\begin{equation}
\frac{B_{3}}{(t C_{2})^{\frac{n}{\alpha +1-n}}}\gamma \left( \frac{n}{\alpha
+1-n},t C_{2}\right) \leq S(t)\leq \frac{A_{3}}{(tA_1)^{\frac{n}{\alpha +1-n
}}}\gamma \left( \frac{n}{\alpha +1-n},tA_1\right) ,  \label{Lemma (i)}
\end{equation}
where $\gamma (s,x):=\int_0^{x}z^{s-1}e^{-z}dz$ is the lower incomplete
gamma function, $C_{2}$ is the constant given in (\ref{1-TF(J)<= linear})
and $A_1=\frac{A}{p^{\alpha +1}}.$

\item If $J$ is of logarithmic type and satisfies the hypothesis of
Lemma \ref{1-TFJ int_no}$-(ii)$ then there exist positive real
constants $B_{4}=\frac{p^n-1}{(\beta-n-\alpha )\ln p}$, $A_{4}=\frac{p^n-1}{
(\beta-n)\ln p}$, such that
\begin{equation*}
\frac{B_{4}}{(tE_{2})^{\frac{n}{\beta-n-\alpha}}}\gamma \left( \frac{n}{
\beta-n-\alpha},tE_{2}\right) \leq S(t)\leq \frac{A_{4}}{(tE_{3})^{\frac{n}{\beta-n}}}\gamma \left( \frac{n}{\beta-n},tE_{3}\right) ,
\end{equation*}
where $\gamma (s,x):=\int_0^{x}z^{s-1}e^{-z}dz$ is the lower incomplete
gamma function, $E_{2}$ is the constant given in (\ref{1-TF(J)<= logaritmic}) and $E_{3}$ is the constant given in (\ref{des_I_4}).
\end{enumerate}
\end{theorem}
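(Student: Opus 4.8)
The plan is to start from the explicit series representation \eqref{S(t) expansion},
\[
S(t)=(1-p^{-n})\sum_{j=0}^{\infty}p^{-nj}e^{-t(1-(\mathcal{F}J)(p^{-j}))},
\]
and to trap it between two quantities of the same analytic shape by replacing $1-(\mathcal{F}J)$ with matching power laws on the unit ball. Since $z\mapsto e^{-tz}$ is decreasing, an upper estimate for $1-(\mathcal{F}J)$ yields a lower bound for $S(t)$ and a lower estimate yields an upper bound. In the linear case I would use the upper estimate \eqref{1-TF(J)<=  linear}, namely $1-(\mathcal{F}J)(\|\xi\|_p)\le C_2\|\xi\|_p^{\alpha+1-n}$, together with a companion lower estimate $1-(\mathcal{F}J)(\|\xi\|_p)\ge A_1\|\xi\|_p^{\alpha+1-n}$ obtained from Lemma \ref{lemma1}-(ii) by keeping only the leading ($j=1$) term and inserting the lower bound $\tfrac{B}{\|x\|_p^{\alpha+1}}\le J(\|x\|_p)$ of Definition \ref{Landscapes type definition}-(i); this is exactly where the constant $A_1=\tfrac{A}{p^{\alpha+1}}$ enters. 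Observe that the hypothesis $\alpha+1-2n\ge 0$ forces $\alpha+1-n\ge n>0$, so both exponents are positive and the resulting integrals converge.

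After this replacement the two bounding series both take the form $(1-p^{-n})\sum_{j\ge0}p^{-nj}e^{-tc\,p^{-j(\alpha+1-n)}}$, and the heart of the argument is to compare such a series with the integral $I(c)=\int_0^\infty p^{-nx}e^{-tc\,p^{-x(\alpha+1-n)}}\,dx$. The substitution $z=tc\,p^{-x(\alpha+1-n)}$ converts this integral into
\[
I(c)=\frac{(tc)^{-n/(\alpha+1-n)}}{(\alpha+1-n)\ln p}\int_0^{tc}z^{\frac{n}{\alpha+1-n}-1}e^{-z}\,dz=\frac{(tc)^{-\frac{n}{\alpha+1-n}}}{(\alpha+1-n)\ln p}\,\gamma\!\left(\tfrac{n}{\alpha+1-n},tc\right),
\]
which is precisely the incomplete gamma expression appearing in \eqref{Lemma (i)}, with $c=C_2$ on the lower side and $c=A_1$ on the upper side. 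Matching the prefactors of $I(C_2)$ and $I(A_1)$ against the series, after collecting the measure factor $(1-p^{-n})$, is what produces the constants $B_3$ and $A_3$.

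The main obstacle will be the comparison between the discrete sum and $I(c)$, because the summand $g(x)=p^{-nx}e^{-tc\,p^{-x(\alpha+1-n)}}$ is \emph{not} monotone: the factor $p^{-nx}$ decreases while $e^{-tc\,p^{-x(\alpha+1-n)}}$ increases, so $g$ is unimodal and the naive integral test fails globally. I would handle this by the piecewise estimate on each interval $[j,j+1]$ (the technique of \cite[Appendix A]{Av-2}): bounding $g$ by its extreme values on the interval gives a two-sided comparison of the shape $p^{-n}I(c)\le\sum_{j\ge0}p^{-nj}e^{-tc\,p^{-j(\alpha+1-n)}}\le p^{n}I(c)$, and tracking the factors $p^{\pm n}$ through the prefactor $(1-p^{-n})$ delivers constants of the stated form. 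For the logarithmic case the same scheme applies, but the two sides now rest on genuinely different power laws: the lower bound for $S(t)$ uses \eqref{1-TF(J)<=  logaritmic}, $1-(\mathcal{F}J)(\|\xi\|_p)\le E_2\|\xi\|_p^{\beta-n-\alpha}$ (giving the gamma argument $\tfrac{n}{\beta-n-\alpha}$), while the upper bound uses the exact expression in Remark \ref{remarks on dif type}-(ii), where discarding all but the leading term and bounding $\ln^\alpha(1+p\|\xi\|_p^{-1})\ge\ln^\alpha(1+p)$ for $\|\xi\|_p\le1$ yields $1-(\mathcal{F}J)(\|\xi\|_p)\ge E_3\|\xi\|_p^{\beta-n}$ (giving the gamma argument $\tfrac{n}{\beta-n}$). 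Controlling these $\ln^\alpha(1+\cdot)$ factors—bounding them above via $\ln^\alpha(1+p^j\|\xi\|_p^{-1})\le p^{j\alpha}\|\xi\|_p^{-\alpha}$ exactly as in the proof of Lemma \ref{1-TFJ int_no}, and below by constants—is the extra delicate point in part (ii), with the positivity of the exponents $\beta-n-\alpha\ge n$ and $\beta-n$ again guaranteed by $\beta-\alpha-2n\ge0$.
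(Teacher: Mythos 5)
Your plan is essentially the paper's own proof: the same two-sided power-law estimates on $1-(\mathcal{F}J)$ (with $A_1$ extracted from the leading term of Lemma \ref{Lemma_J}-(ii) and the lower bound $B\|x\|_p^{-(\alpha+1)}\leq J$, and with $E_2$, $E_3$ playing the analogous roles in the logarithmic case), the same shifted piecewise comparison of the series with $I(c)=\int_0^\infty p^{-nx}e^{-tc\,p^{-x(\alpha+1-n)}}\,dx$ over each interval $[j,j+1]$ in the style of \cite[Appendix A]{Av-2}, and the same substitution $z=tc\,p^{-x(\alpha+1-n)}$ producing the lower incomplete gamma function. The one substantive divergence is in your favor: your two-sided comparison $p^{-n}I(c)\leq\sum_{j\geq0}p^{-nj}e^{-tc\,p^{-j(\alpha+1-n)}}\leq p^{n}I(c)$ is the correct one, whereas the paper's lower-bound step (\ref{e^-t<=2})--(\ref{des_I_2}) reverses the inequality chain (the same monotonicity argument that proves (\ref{e^-t<=}) shows the reversed chain is false) to claim $\sum_{j}\geq p^{n}\int_0^\infty p^{-nx}e^{-tC_2p^{-x(\alpha+1-n)}}\,dx$; carrying your correct factor $p^{-n}$ through instead of $p^{n}$ yields a lower constant smaller than the stated $B_3=\frac{p^n-1}{(\alpha+1-n)\ln p}$ by a factor of $p^{2n}$ (and similarly for $B_4$), so the qualitative two-sided gamma bound survives but the explicit constants in the statement need adjusting.
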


\begin{proof}
\begin{enumerate}[(i)]
\item From Lemma \ref{lemma1}$-(ii),$ and the fact that $J$ is of linear
type we get that
\begin{align*}
1-(\mathcal{F}J)\left(\|\xi \|_p\right)&\geq B\|\xi \|_p^{\alpha+1-n}\left\{p^{n-\alpha-1}+(1-p^{-n})\sum_{j=2}^{\infty}p^{j(n-\alpha-1)} \right\}.
\end{align*}
Since $n-\alpha-1<0$ we have that $\sum_{j=2}^{\infty}p^{j(n-\alpha-1)}<\infty$. Therefore, there exists a positive constant $A_1$ such that
\begin{equation*}
1-(\mathcal{F}J)\left( \|\xi \|_p\right) \geq A_1\|\xi \|_p^{\alpha +1-n},
\text{ }\xi \in \mathbf{\mathbb{Q}}_p^n\backslash \{0\}.
\end{equation*}
So that by (\ref{S(t) expansion}) we have that
\begin{equation}
S(t)\leq (1-p^{-n})\sum_{j=0}^{\infty }p^{-nj}e^{-tA_1p^{-j(\alpha +1-n)}}.
\label{S(t)<=sum}
\end{equation}
We know that $e^{-tA_1p^{-x(\alpha +1-n)}}$ is an increasing function and $
p^{-nx}$ is a decreasing function in the variable $x$. Therefore, we have on
the interval $j\leq x\leq j+1$ the inequalities
\begin{equation}
\frac{e^{-tA_1p^{-(x-1)(\alpha +1-n)}}}{p^{nx}}\leq \frac{
e^{-tA_1p^{-j(\alpha +1-n)}}}{p^{nj}}\leq \frac{e^{-tA_1p^{-x(\alpha
+1-n)}}}{p^{n(x-1)}}.  \label{e^-t<=}
\end{equation}
Integrating (\ref{e^-t<=}) in the variable $x$ from $j$ to $j+1,$ we get
\begin{equation*}
\int_j^{j+1}\frac{e^{-tA_1p^{-(x-1)(\alpha +1-n)}}}{p^{nx}}dx\leq
\int_j^{j+1}\frac{e^{-tA_1p^{-j(\alpha +1-n)}}}{p^{nj}}dx\leq
\int_j^{j+1}\frac{e^{-tA_1p^{-x(\alpha +1-n)}}}{p^{n(x-1)}}dx.
\end{equation*}
So that
\begin{equation*}
\int_j^{j+1}\frac{e^{-tA_1p^{-(x-1)(\alpha +1-n)}}}{p^{nx}}
dx\leq \frac{e^{-tA_1p^{-j(\alpha +1-n)}}}{p^{nj}}\leq \int_j^{j+1}\frac{
e^{-tA_1p^{-x(\alpha +1-n)}}}{p^{n(x-1)}}dx.
\end{equation*}%
Now, summing with respect to $j$ from $0$ to $\infty$ we have that
\begin{equation*}
\int_0^{\infty }\frac{e^{-tA_1p^{-(x-1)(\alpha +1-n)}}}{
p^{nx}}dx\leq \sum_{j=0}^{\infty }p^{-nj}e^{-tA_1p^{-j(\alpha
+1-n)}}\leq p^n\int_0^{\infty }\frac{e^{-tA_1p^{-x(\alpha +1-n)}}}{
p^{nx}}dx.
\end{equation*}
Next, we will consider only the inequality
\begin{equation} \label{des_I_1}
\sum_{j=0}^{\infty }p^{-nj}e^{-tA_1p^{-j(\alpha
+1-n)}}\leq p^n\int_0^{\infty }\frac{e^{-tA_1p^{-x(\alpha +1-n)}}}{
p^{nx}}dx=I_1.
\end{equation}
For $I_1$ let's change the variable $z=tA_1p^{-x(\alpha+1-n)}$. Then, we have that $dz=-z(\alpha+1-n)\ln p \text{ } dx$
and $x=\frac{log_p(z^{-1}tA_1)}{\alpha+1-n}$.\\
On the other hand, notice that when $x\rightarrow 0$ then $z\rightarrow tA_1$ and if $x\rightarrow \infty$ then $z\rightarrow 0$.
Therefore,
\begin{align*}
I_1&=\frac{p^n}{(\alpha+1-n)(tA_1)^{\frac{n}{\alpha+1-n}}\ln p}\int_0^{tA_1}e^{-z}z^{\frac{n}{\alpha+1-n}-1}dz\\
&=\frac{p^n}{(\alpha+1-n)(tA_1)^{\frac{n}{\alpha+1-n}}\ln p}\gamma \left( \frac{n}{\alpha +1-n},tA_1\right).
\end{align*}
So that by (\ref{S(t)<=sum}) and (\ref{des_I_1}) we have that
\begin{equation*}
S(t)\leq \frac{A_{3}}{(tA_1)^{\frac{n}{\alpha+1-n}}}
\gamma \left( \frac{n}{\alpha +1-n},tA_1\right).
\end{equation*}

On the other hand, by (\ref{1-TF(J)<=  linear}) and (\ref{S(t) expansion}) we have that
\begin{equation}
S(t)\geq (1-p^{-n})\sum_{j=0}^{\infty}p^{-nj}e^{-tC_{2}p^{-j(\alpha+1-n)}}.
\label{S(t)<=sum2}
\end{equation}
We know that $e^{-tC_{2}p^{-x(\alpha+1-n)}}$ is an increasing function and $
p^{-nx}$ is a decreasing function in the variable $x$. Therefore, we have on
the interval $j\leq x\leq j+1$ the inequalities
\begin{equation}
\frac{e^{-tC_{2}p^{-(x-1)(\alpha+1-n)}}}{p^{nx}}\geq \frac{
e^{-tC_{2}p^{-j(\alpha+1-n)}}}{p^{nj}}\geq \frac{e^{-tC_{2}p^{-x(\alpha
+1-n)}}}{p^{n(x-1)}}.  \label{e^-t<=2}
\end{equation}
Integrating (\ref{e^-t<=2}) in the variable $x$ from $j$ to $j+1,$ we get
\begin{equation*}
\int_j^{j+1}\frac{e^{-tC_{2}p^{-(x-1)(\alpha +1-n)}}}{p^{nx}}dx\geq
\int_j^{j+1}\frac{e^{-tC_{2}p^{-j(\alpha +1-n)}}}{p^{nj}}dx\geq
\int_j^{j+1}\frac{e^{-tC_{2}p^{-x(\alpha +1-n)}}}{p^{n(x-1)}}dx.
\end{equation*}
So that
\begin{equation*}
\int_j^{j+1}\frac{e^{-tC_{2}p^{-(x-1)(\alpha +1-n)}}}{p^{nx}}
dx\geq \frac{e^{-tC_{2}p^{-j(\alpha +1-n)}}}{p^{nj}}\geq \int_j^{j+1}\frac{
e^{-tC_{2}p^{-x(\alpha +1-n)}}}{p^{n(x-1)}}dx.
\end{equation*}
Now, summing with respect to $j$ from $0$ to $\infty$ we have that
\begin{equation*}
\int_0^{\infty }\frac{e^{-tC_{2}p^{-(x-1)(\alpha +1-n)}}}{
p^{nx}}dx\geq \sum_{j=0}^{\infty }p^{-nj}e^{-tC_{2}p^{-j(\alpha
+1-n)}}\geq p^n\int_0^{\infty }\frac{e^{-tC_{2}p^{-x(\alpha +1-n)}}}{
p^{nx}}dx.
\end{equation*}
Next, we will consider only the inequality
\begin{equation} \label{des_I_2}
\sum_{j=0}^{\infty }p^{-nj}e^{-tC_{2}p^{-j(\alpha
+1-n)}}\geq p^n\int_0^{\infty }\frac{e^{-tC_{2}p^{-x(\alpha +1-n)}}}{
p^{nx}}dx=I_{2}.
\end{equation}
For $I_{2}$ let's change the variable $z=tC_{2}p^{-x(\alpha+1-n)}$. Then, we have that $dz=-z(\alpha+1-n)\ln p \text{ } dx$
and $x=\frac{\log_p(z^{-1}tC_{2})}{\alpha+1-n}$.\\
On the other hand, notice that when $x\rightarrow 0$ then $z\rightarrow tC_{2}$ and if $x\rightarrow \infty$ then $z\rightarrow 0$.
Therefore,
\begin{align*}
I_{2}&=\frac{p^n}{(\alpha+1-n)(tC_{2})^{\frac{n}{\alpha+1-n}}\ln p}\int_0^{tC_{2}}e^{-z}z^{\frac{n}{\alpha+1-n}-1}dz\\
&=\frac{p^n}{(\alpha+1-n)(tC_{2})^{\frac{n}{\alpha+1-n}}\ln p}\gamma \left(\frac{n}{\alpha +1-n},tC_{2}\right).
\end{align*}
So that by (\ref{S(t)<=sum2}) and (\ref{des_I_2}) we have that
\begin{equation*}
S(t)\geq \frac{B_{3}}{(tC_{2})^{\frac{n}{\alpha+1-n}}}
\gamma \left( \frac{n}{\alpha +1-n},tC_{2}\right).
\end{equation*}
\item For $\xi\in \mathbb{Q}_p^n\setminus \{0\}$ we have by Lemma \ref{lemma1}$-(ii)$ and the fact that $J$ is of logarithmic
type that $1-(\mathcal{F}J)\left(\|\xi \|_p\right)$ exceeds
\begin{eqnarray*}
&& D\|\xi \|_p^{\beta-n}\left\{p^{n-\beta}\ln^{\alpha}(1+p\|\xi\|^{-1})+(1-p^{-n})\sum_{j=2}^{\infty}p^{j(n-\beta)}\ln^{\alpha}(1+p^j\|\xi\|_p^{-1})\right\}\\
&\geq& D\|\xi\|_p^{\beta-n-\alpha}\left\{\frac{p^{n+\alpha-\beta}}{(1+p\|\xi\|^{-1})^{\alpha}}+(1-p^{-n})\sum_{j=2}^{\infty}\frac{p^{j(n+\alpha-\beta)}}{(1+p^j\|\xi\|_p^{-1})^{\alpha}}\right\}\\
&\geq& \frac{Dp^{n+\alpha-\beta}\|\xi\|_p^{\beta-n-\alpha}}{(1+p\|\xi\|^{-1})^{\alpha}}\\
&=& \frac{Dp^{n+\alpha-\beta}\|\xi\|_p^{\beta-n}}{(\|\xi\|_p+p)^{\alpha}}.
\end{eqnarray*}
Therefore, if $J$ is of logarithmic type we get that
\begin{equation} \label{des_I_3}
1-(\mathcal{F}J)\left(\|\xi \|_p\right)\geq \frac{Dp^{n+\alpha-\beta}\|\xi\|_p^{\beta-n}}{(\|\xi\|_p+p)^{\alpha}}, \text{  for  } \xi\in \mathbf{\mathbb{Q}}_p^n\backslash \{0\}.
\end{equation}
Now, from (\ref{des_I_3}) we have that
\begin{equation} \label{des_I_4}
1-(\mathcal{F}J)\left(\|\xi \|_p\right)\geq E_{3}\|\xi\|_p^{\beta-n}, \text{  if  } \xi\in \mathbf{\mathbb{Z}}_p^n\backslash \{0\},
\end{equation}
where $E_{3}=\frac{Dp^{n+\alpha-\beta}}{(2+p)^{\alpha}}$.\\
Therefore, by (\ref{1-TF(J)<=  logaritmic}), (\ref{des_I_4}) and proceeding analogously as the case $(i)$, we obtain the desired inequalities.
\end{enumerate}
\end{proof}

\end{document}